\setlist{noitemsep,leftmargin=\parindent,topsep=2pt}
\newcommand{\kibitz}[2]{\ifnum\Comments=1{\color{#1}{#2}}\fi}
\newcommand{\es}[1]{\kibitz{red}{}}
\newcommand{\E}{\mathbb{E}}
\DeclareMathOperator*{\argmax}{arg\,max}
\newcommand{\rgtucb}{\textsc{Regret-UCB}}
\newcommand{\init}{\textsc{Init}}
\newcommand{\genbids}{\textsc{Generate-Bids}}
\newcommand{\randombids}{\textsc{Random-Bids}}
\newcommand{\egreedy}{\textsc{$\eps$-Greedy}}
\newcommand{\1}{\mathbbm{1}}
\newcommand{\eps}{\epsilon}
\newcommand{\R}{\mathbb{R}}
\newcommand{\PP}{\mathbb{P}}
\newcommand{\bvec}{\vec{b}}
\theoremstyle{plain}
\newtheorem{theorem}{Theorem}[section]
\newtheorem{corollary}{Corollary}[section]
\newtheorem{lemma}[theorem]{Lemma}
\newtheorem{definition}[theorem]{Definition}
\newtheorem{assumption}[theorem]{Assumption}
\begin{document}
\title{{\bfseries Online Learning for Measuring Incentive Compatibility in Ad Auctions}}

\author[1]{Zhe Feng\footnote{This work was done while the first author was an intern in Core Data Science, Facebook Research.}}
\author[2]{Okke Schrijvers}
\author[2]{Eric Sodomka}
\affil[1]{John A.~Paulson School of Engineering and Applied Sciences, Harvard University\authorcr \texttt{zhe\_feng@g.harvard.edu}}
\affil[2]{Facebook Research\authorcr \texttt{\{okke,sodomka\}@fb.com}}

\date{}
\maketitle

\begin{abstract}
In this paper we investigate the problem of measuring end-to-end Incentive Compatibility (IC) regret given black-box access to an auction mechanism. Our goal is to 1) compute an estimate for IC regret in an auction, 2) provide a measure of certainty around the estimate of IC regret, and 3) minimize the time it takes to arrive at an accurate estimate. We consider two main problems, with different informational assumptions: In the \emph{advertiser problem} the goal is to measure IC regret for some known valuation $v$, while in the more general \emph{demand-side platform (DSP) problem} we wish to determine the worst-case IC regret over all possible valuations. The problems are naturally phrased in an online learning model and we design $\rgtucb$ algorithms for both problems. We give an online learning algorithm where for the advertiser problem the error of determining IC shrinks as $O\Big(\frac{|B|}{T}\cdot\Big(\frac{\ln T}{n} + \sqrt{\frac{\ln T}{n}}\Big)\Big)$ (where $B$ is the finite set of bids, $T$ is the number of time steps, and $n$ is number of auctions per time step), and for the DSP problem it shrinks as $O\Big(\frac{|B|}{T}\cdot\Big( \frac{|B|\ln T}{n} + \sqrt{\frac{|B|\ln T}{n}}\Big)\Big)$. For the DSP problem, we also consider stronger IC regret estimation and extend our $\rgtucb$ algorithm to achieve better IC regret error. We validate the theoretical results using simulations with Generalized Second Price (GSP) auctions, which are known to not be incentive compatible and thus have strictly positive IC regret.
\end{abstract}

\section{Introduction}\label{sec:intro}
Online advertising has grown into a massive industry, both in terms of ads volume as well as the complexity of the ecosystem. Most inventory is sold through auctions, and demand-side platforms (DSPs)---such as AppNexus, Quantcast, Google's DoubleClick and Facebook's Audience Network---offer advertisers the opportunity buy inventory from many different publishers. In turn, publishers themselves may run their own auctions to determine which ad to show and what price to charge, resulting in a complicated sequence of auction systems. While auction theory has influenced the design of online ad auctions, in practice not all auctions are incentive compatible. Practices such as sequential auctions (where the winners of an auction on a DSP compete in further auctions on the publisher website) and ``dynamic reserve prices'' (where publishers use a reserve price of e.g. $50\%$ of the highest bid in the auction) both violate incentive compatibility, even when individual auctions are otherwise designed to be incentive compatible.

A lack of incentive compatibility in the ecosystem is problematic both for advertisers and DSPs. Advertisers have to spend time and energy in figuring out the optimal bidding strategy for different placements, a process which can be quite costly especially when incentives differ across different publishers. For DSPs this is problematic because common budget management techniques, such as multiplicative pacing, are only optimal when the individual auctions are incentive compatible \cite{CKSS18}.

Since (the lack of) incentive compatibility impacts bidding strategies for advertisers and the quality of the product that a DSP offers, it's important to understand the end-to-end incentive compatibility of an ad system. But how should we think of quantifying incentive compatibility? IC is a binary property: either a buyer maximizes their utility by bidding truthfully or she doesn't. However, this fails to capture that a first-price auction is ``worse,'' in some sense, than a second-price auction with a dynamic reserve price of $50 \%$ of the highest bid. To capture these differences, we focus on dynamically measuring \emph{IC regret}:
\begin{align}
\text{IC regret}(v_i) = \max_{b_i} \E_{b_{-i}}\left[ u_i(b_i, b_{-i}) - u_i(v_i, b_{-i})\right],\label{eq:ic-regret}
\end{align}
where $v_i$ is the true value of advertiser $i$, $b_i$ the bid of $i$, $b_{-i}$ the bids of other advertisers, and $u_i(\cdot)$ the (expected) utility of $i$. IC regret captures the difference in utility between bidding truthfully, and the maximum utility achievable. By definition, incentive compatible mechanisms have IC regret $0$, while higher IC regret indicates a stronger incentive to misreport.

In this paper we focus on measuring IC regret assuming only black-box access to the auction mechanism. That is, for a given advertiser bid $b_i$ we observe whether this yielded an impression $x_i$, and what price $p_i$ was paid if any. While it seems restrictive to only focus on black box access, this setting allows one to deal with the fact that auction logic is spread out over different participants in the ecosystem (e.g. a DSP and publisher may each run an auction using proprietary code), and it even allows an advertiser who has no access to any auction code to verify the end-to-end incentives in the system.

Given black-box access to the auction mechanism, our goal is to 1) compute an estimate for IC regret in an auction, 2) provide a measure of certainty around the estimate of IC regret, and 3) minimize the time it takes to arrive at an accurate estimate.

We approach this problem using tools from the combinatorial (semi-)bandit literature: to measure the IC regret in an auction system, a bidder may enter different bids in different auctions\footnote{In this paper, we consider the model that a bidder participates in many auctions and partitions them into several blocks. For every auction in each block, the bidder enters a bid and only observes the allocation and payment of the bid he enters from each block. For more formal definition, see Section~\ref{sec:online-learning-model}.}. By judiciously choosing these bids, we give bounds on the difference between the measured IC regret, and the true IC regret of the system as a function of the number of auctions the bidder participated in. This provides a trade-off between the accuracy of the IC regret measurement, and the the amount of time (and thus money) spend on obtaining the insight.

We consider the problem from two perspectives. The first, which we call the ``Advertiser Problem,'' consists of determining the IC regret in the system for some \emph{known} true value $v$. This is the problem the advertiser faces when they are participating in an online ad auction. In Section~\ref{sec:fix-value} we give an algorithm to select bids that determine IC regret with an error that shrinks as $O\Big(\frac{|B|}{T}\cdot\Big(\frac{\ln T}{n} + \sqrt{\frac{\ln T}{n}}\Big)\Big)$, where $B$ is the (finite) bid space, $n$ is the number of auctions at each time step and $T$ is the total number of time steps. The second problem, which we call the ``DSP Problem'' consists of determining IC regret for the worst-case valuation $v$, i.e. the value $v$ that yields the \emph{highest} IC regret. This is the problem that a DSP faces if they want to ensure that IC regret is low for \emph{all} advertisers that may use their services. In Section~\ref{sec:non-fix-value} we give an algorithm to select bids and (hypothetical) values that determine worst-case IC regret over all possible values where the error term shrinks as $O\Big(\frac{|B|}{T}\cdot\Big( \frac{|B|\ln T}{n} + \sqrt{\frac{|B|\ln T}{n}} \Big)\Big)$ as the number of time steps $T$ grows.

The rest of the paper is organized as follows. Section 2 illustrates the main model and preliminaries in this paper. Sections 3 and Section 4 discuss two main problems about measuring IC regret in practice, the advertiser problem and the DSP problem. In Section 5, we consider the stronger IC regret estimation for the DSP problem and extend our $\rgtucb$ algorithm to achieve better IC regret error. Moreover, we discuss the continuous bids space setting. In Section 6, we run some simulations with GSP auctions to validate our theoretical results. Finally, we conclude our paper.
\vspace{-5pt}
\subsection{Related Work}
Structurally, designing incentive compatible (IC) mechanisms is well-understood: if the corresponding optimization problem can be solved exactly then the VCG mechanism \cite{V61,C71,G73} is IC. In cases where computing the optimal allocation is prohibitively expensive, and some approximation algorithm is used, VCG is no longer incentive compatible \cite{NR07}. At the same time, the structure of IC mechanisms beyond optimal allocation is known, and tied to some monotonicity property of the allocation function for single-parameter settings \cite{M81} and more generally \cite{R87}. A weakness in these approaches is that determining if some given mechanism $M$ is IC, one needs to know both the allocation \emph{function} and the pricing \emph{function}. Without access to the code that computes the allocation and pricing, the best an advertiser (or a DSP who depends on a third party auction that's run by a publisher) can hope to have access to is samples from the allocation and pricing functions, not the functions themselves.

\citet{LMSV18} are the first to study the problem of testing incentive compatibility assuming only black box access. They design an A/B experiment to determine whether an auction is incentive compatible in both single-shot and dynamic settings (the latter concerns cases where an advertiser's bid is used to set their reserve price in later auctions). The work of \citet{LMSV18} provides a valuable tool for validation of incentive compatibility, but leaves open the question of how to design an experiment that minimizes the time required to get results with high confidence. The present paper complements \cite{LMSV18} by giving an algorithm to select alternative bids that minimize the error in the estimate for IC regret.

Our work lies at the intersections of two areas, {\em No regret learning in Game Theory and Mechanism Design} \cite{CHN14} and {\em Contextual Bandits} \cite{BCB2012, Agarwal14}. On the one hand, in Game Theory and Mechanism Design, most of the existing literature focuses on maximizing revenue for the auctioneer without knowing a priori the valuations of the bidders \cite{BKRW04, ARS14, ACDKR15}, as well as optimizing the bidding strategy from the bidder side \cite{WRP16, BG17, FPS18}. To the best of our knowledge, this is the first paper on testing IC through online learning approach. On the other hand, our work is related to the literature on {\em Contextual Bandits}. In the terminology of contextual bandits, the discrete bids in our work correspond to arms, and the observed average allocation and payment of the selected bids at each time correspond to different contexts.

\subsection{Contributions}
There are three main contributions of this paper.

\begin{enumerate}
\item We build an online learning model to measure IC regret with only black-box access to the auction mechanism. We present algorithms that find IC regret with an error that shrinks as $O\Big(\frac{|B|}{T}\cdot\Big(\frac{\ln T}{n} + \sqrt{\frac{\ln T}{n}}\Big)\Big)$ for a known valuation $v$, and an error  that shrinks as $O\Big(\frac{|B|}{T}\cdot\Big( \frac{|B|\ln T}{n} + \sqrt{\frac{|B|\ln T}{n}} \Big)\Big)$ for all possible valuations $v$. In the above,  $B$ is the (finite) bid space, $n$ is the number of auctions at each time step and $T$ is the total number of time steps. We also extend our $\rgtucb$ algorithm to handle stronger IC regret estimation and shrink the error as $O\left(\frac{|B|^2(\ln T)^{2/3}}{n^{2/3}T}\right)$.

\item We present a combinatorial semi-bandit algorithm for the setting where the "observed reward" is the $\mathrm{max}$ over several arms selected at each time while the benchmark is the standard optimal arm with highest expected reward. To understand this mismatch between the benchmark and observed reward, we analyze the {\em Pseudo-Regret} of the algorithm and uncover its trade-off with the number of arms selected at each time. This analysis may be of independent interest to the combinatorial (Semi-)Bandits literature.
\item Simulations suggest that, (1) there is a trade-off between {\em Pseudo-Regret} and the number of blocks to partition auctions, i.e. we would like to choose number of blocks neither very small nor very large, (2) the {\em Pseudo-Regret} (near linearly) decays when the number of auctions accessed at each time grows, and (3) our designed algorithm performs better than naive $\randombids$ algorithm and $\egreedy$ algorithm.
\end{enumerate}
\vspace{-5pt}
\section{Model and Preliminaries}\label{sec:prelim}
In this section, we formally define the model we considered in this paper and present some important preliminaries. For simplifying description, we introduce "test bidder" to model both the advertiser and DSP.
\vspace{-10pt}
\subsection{Auctions for a Test Bidder}
Consider a test bidder $i$ who is eligible for $n$ auctions every day. In each auction, they submit a bid $b_i$, the auction is run and the outcome is determined by the allocation rule $g_i: b_i \rightarrow[0, 1]$ and the payment rule $p_i: b\rightarrow\R_{\geq 0}$ (where $g_i$ and $p_i$ are conditioned on the competition in the auction). We assume a single-parameter quasilinear utility model $u_i(v_i, b_i) = g_i(b_i)\cdot v_i - p_i(b_i)$ for some true value $v_i\in \R_{\geq 0}$. Let $U$ be a bound on the utility of a buyer, i.e. the utility function is bounded by $[-U, U]$.

Since there is a large number of auctions and other bidders in the ecosystem, we  model the randomness in the system using a stationarity assumption where the conditional allocation and pricing functions are drawn from the same underlying distribution for each auction $k$:
\begin{assumption} [Stochastic Assumption]
For each auction $k$, the allocation rule $g_i$ and payment rule $p_i$ of the test bidder $i$ are drawn from an unknown distribution $\mathcal{F}$, i.e. $(g_i, p_i)_k \overset{\text{\footnotesize i.i.d.}}{\sim} \mathcal{F}$.
\end{assumption}
\vspace{-10pt}
\subsection{Online Learning Model}\label{sec:online-learning-model}
At every time step $t$, the test bidder participates $n$ auctions and randomly partitions them into $m+1$ blocks of equal size.\footnote{The number of blocks $m$ is a variable that will be fixed later. The choice of which $m$ to use will trade off the number of different bids we get information on and the accuracy of the outputs for each bid. Throughout the paper, we assume $m < |B|$.} For $j\in [m+1]$, let $\mathcal{A}_j$ be the set of auctions in block $j$.\footnote{We let $[n] = \{1, 2, 3, ..., n\}$ be the set of positive integers up to and including $n$.} For every auction in each block $j$, the test bidder submits a bid $b_t^j\in B$, where $B$ is a finite set of bids.\footnote{While in theory the bid space is continuous, in practice outcomes for different bids are relatively smooth, see e.g. the plots in \cite{LMSV18}, so discretizing the bid space is a reasonable simplification. For more discussion, see Section~\ref{sec:continuous-bid-space}.} %

At the end of time $t$, in each block $j\in [m]$, the test bidder observes the average allocation probability $\widetilde{g}_t^j(b_t^j)$ and average payment $\widetilde{p}_t^j(b_t^j)$ over all auctions in block $j$. Let $\bvec_t = \{b_t^1, b_t^2,\cdots,b_t^m\}$ be the bids of the test bidder at time $t$. For block $m+1$, the test bidder bids their true value $v$ and observes the average allocation $\widetilde{g}^{m+1}_t(v)$ and payment $\widetilde{p}^{m+1}_t(v)$. Without loss of generality, we also assume $v\in B$. The average utility of the test bidder in each block $j\in[m+1]$ is $\widetilde{u}^j_t (v, b_t^j)\equiv \widetilde{g}^j_t(b_t^j)\cdot v - \widetilde{p}^j_t(b_t^j)$.

With the setting defined, we can instantiate an empirical version of the IC regret from Equation~\eqref{eq:ic-regret} for a given true value $v$ and bid $b_t^j$:

\begin{definition}[Empirical IC regret]\label{def:emp-regret}
\begin{equation}
\widetilde{rgt}_t\left(v, \bvec_t\right) = \max_{j\in[m]}\left\{\widetilde{u}^j_t (v, b^j_t) - \widetilde{u}^{m+1}_t (v, v)\right\}.
\end{equation}
\end{definition}

For notational simplicity, we denote $g^*(\cdot) \triangleq \E\left[\widetilde{g}^j_t(\cdot)\right]$ and $p^*(\cdot) \triangleq \E\left[\widetilde{p}^j_t(\cdot)\right]$ for each block $j$ at time $t$.\footnote{Throughout this paper, the expectation $\E$ is over all the randomness (e.g., randomization over payment and allocation distribution and randomization over algorithm).} Similarly, define $u^*(v, b)  \triangleq \E\left[\widetilde{u}^j_t(v, b)\right] \equiv g^*(b)\cdot v - p^*(b)$. Given these definitions, {\em IC regret} for a particular true value $v$ and bid $b$ is:
\begin{definition}[IC-regret]\label{def:regret}
$rgt(v, b)=u^*(v, b) - u^*(v, v)$.
\end{definition}

Thus far, in Definitions~\ref{def:emp-regret}~and~\ref{def:regret} we've considered the regret that a buyer has for bidding their true value $v$ compared to a particular alternative bid $b$. The quantity that we're really interested in (cf. Equation~\eqref{eq:ic-regret}) is the worst-case IC regret with respect to different possible bids. For the Advertiser Problem (which we treat in Section~\ref{sec:fix-value}) this is with respect to some known true value $v$, whereas for the DSP Problem (which we treat in Section~\ref{sec:non-fix-value}) we consider the worst-case IC regret over all possible true values $v$. To summarize, the learning task of the test bidder is to design an efficient learning algorithm to generate $v_t$ and $b_t$ in order to minimize the following {\em Pseudo-Regret}, the difference between cumulative empirical regret and benchmark.\footnote{Where in the case of the Advertiser Problem, we define Pseudo-Regret with respect to a given value $v$ analogously.}

\begin{definition}[Pseudo-Regret~\cite{BCB2012}]
\begin{equation}\label{eq:pseudo-regret}
\E[R(T)] = \max_{v, b\in B}\sum_{t=1}^T rgt(v, b) - \E\left[\sum_{t=1}^T\widetilde{rgt}_t(v_t, \bvec_t)\right]
\end{equation}
\end{definition}

\es{Add a sentence to explain?}

Given the above {\em Pseudo-Regret} definition, we can define the error of determining IC regret, which is to measure the distance between the optimal IC regret and the average empirical IC regret over time.
\begin{definition}[IC Regret Error] $\mathcal{E}(T) = \frac{\E[R(T)]}{T}$
\end{definition}
\subsection{More Related work}
The online learning model in our work is also related to Combinatorial (Semi)-bandits literature \cite{Gai2012, CWY13, KWAS15}. \citet{CWY13} first proposed the general framework and Combinatorial Multi-Armed Bandit problem, and our model lies in this general framework (i.e. the bids are generated from a super %
arm at each time step). However, in our model the test bidder can observe feedback from multiple blocks at each time step, which is similar to the combinatorial semi-bandits feedback model. \citet{Gai2012} analyze the combinatorial semi-bandits problem with linear rewards, and \citet{KWAS15} provides the tight regret bound for the same setting. The $\rgtucb$ algorithm in our paper is similar and inspired by the algorithm in \cite{Gai2012, KWAS15}. However, in our model, the reward function is not linear and has a ``$\max$'', which needs more work to address it. \citet{CHLLLL16} firstly consider the general reward function in Combinatorial Bandits problem and can handle ``$\max$'' in reward function, however, there is a mismatch between the benchmark and observed reward in our model (see more in "Contributions").

\section{The Advertiser Problem}\label{sec:fix-value}
In this section, we focus on the special case of measuring the IC regret for a known true value $v$. The learning problem in this setting is to select, for each timestep $t$, a bid profile $\bvec_t$ of $m$ bids to be used in the $m$ auction blocks at that time. The bids $\bvec_t$ should be selected to minimize the {\em Pseudo-Regret} defined in Equation~\ref{eq:pseudo-regret}.

We propose the $\rgtucb$ algorithm, given in Algorithm~\ref{alg:rgtucb-multiple-bids}, which is inspired by the $\mathrm{CombUCB1}$ algorithm \cite{Gai2012, KWAS15}. $\rgtucb$ works as follows: at each timestep $t$, the algorithm performs three operations. First, it computes the {\em upper confidence bound} (UCB) on the expected utility of each bid $b\in B$ given $v$,
\begin{equation}\label{eq:ucb-utility}
\mathrm{UCB}^\text{u}_t(v, b) = \widehat{g}_{t-1}(b)\cdot v - \widehat{p}_{t-1}(b) + 2U\sqrt{\frac{2(m+1)\ln t}{n_{t-1}(b)\cdot n}},
\end{equation}
where $\widehat{g}_s(b)$ is the average allocation probability of bid $b$ up to time $s$, $n_s(b)$ is the number of times that bid $b$ has been submitted in $s$ steps and $U$ is the bound of utility (i.e. utility is bounded by $[-U, U]$). Second, the algorithm chooses the $m$ bids that correspond to the largest $\mathrm{UCB}^\text{u}_t$ values; call this bid vector $\bvec_t$. The algorithm uses the $m$ bids $\bvec_t$ in the first $m$ blocks of auctions, and in the final block of auctions, it uses the bidder's true value $v$. Finally, for each of the blocks, the algorithm observes the average allocation probability and payment, and it uses that to update the estimates of allocation $\widehat{g}$ and payment $\widehat{p}$.

\begin{algorithm}[H]
	\begin{algorithmic}
		\State \textbf{Input:} A finite set of bids $B$, parameter $m$ and $n$
		\For{$b$ in $B$}
		\State Randomly participate in $\frac{n}{m+1}$ auctions and submit bid $b$ for each auction.
		\State Observe the average allocation $\widetilde{g}(b)$ and payment $\widetilde{p}(b)$.
		\State $\widehat{g}_0 (b)\leftarrow\widetilde{g}(b)$, $\widehat{p}_0 (b)\leftarrow\widetilde{p}(b)$.
		\EndFor
	\end{algorithmic}
	\caption{$\init$ algorithm to get $\widehat{g}_0$ and $\widehat{p}_0$.\label{alg:init}}
\end{algorithm}

Given a fixed valuation $v$, define $b^*$ to be the best-response bid: $$b^* \triangleq \argmax_{b\in B} rgt(v, b) = \argmax_{b\in B}u^*(v, b).$$
and denote $\Delta(b) \triangleq u^*(v, b^*) - u^*(v, b)$. The following theorem bounds the worst case {\em Pseudo-Regret} of the $\rgtucb$ algorithm for known valuation $v$.
\begin{theorem}\label{thm:rgt-bound-multiple-bids}
$\rgtucb$ achieves pseudo-regret at most $$\sum_{b\in B: u^*(v, b) < u^*(v, b^*)} \frac{32(m+1)U^2\ln T}{n\Delta(b)} + \frac{\pi^2}{3}\cdot\frac{\Delta(b)}{m}$$
\end{theorem}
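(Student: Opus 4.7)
The plan is to adapt the standard combinatorial UCB analysis (as in \cite{Gai2012,KWAS15}) to the two non-standard features of this setup: at each round the test bidder's reward is the $\max$ over the $m$ bids in $\bvec_t$ rather than an additive/linear aggregate, and each selection of a bid produces an average over $n/(m+1)$ i.i.d.\ auctions rather than a single observation. Throughout, let $B_t=\{b_t^1,\ldots,b_t^m\}$ and $N_T(b)=\sum_{t=1}^T\mathbbm{1}[b\in B_t]$.

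The first step is to reduce the pseudo-regret to a weighted sum over selection counts of suboptimal bids. Because $\E[\widetilde{u}_t^{m+1}(v,v)]=u^*(v,v)$, the $u^*(v,v)$ contributions cancel in \eqref{eq:pseudo-regret}, leaving $\E[R(T)] = \sum_{t=1}^T (u^*(v,b^*) - \E[\max_{j\in[m]}\widetilde{u}_t^j(v,b_t^j)])$. Using $\max_j\widetilde{u}_t^j \geq \widetilde{u}_t^j$ for every $j$ and $\E[\widetilde{u}_t^j(v,b_t^j)\mid b_t^j]=u^*(v,b_t^j)$, the per-step gap is at most $\min_{b\in B_t}\Delta(b)$; upper bounding the minimum coarsely by $\sum_{b\in B_t,\,\Delta(b)>0}\Delta(b)$ gives $\E[R(T)]\leq \sum_{b:\Delta(b)>0}\Delta(b)\cdot\E[N_T(b)]$. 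The looseness of this final inequality (which is exactly what drops the $1/m$ savings one might hope for from max-feedback) is what produces the stated main term without a $1/m$ factor.

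The second step is a CombUCB-style bound on $\E[N_T(b)]$ for each suboptimal $b$. Define the good event that every empirical utility lies within its confidence radius $w_t(b) = 2U\sqrt{2(m+1)\ln t/(n_{t-1}(b)\,n)}$; the $(m+1)$ factor inside the radius arises because after $n_{t-1}(b)$ selections of $b$ the bidder has seen $n_{t-1}(b)\cdot n/(m+1)$ raw i.i.d.\ utility samples, and Hoeffding gives per-step, per-bid failure probability summable to a constant $\pi^2/6$. On the good event, if a suboptimal $b\in B_t$ then $b^*$ must have been displaced from the top-$m$ UCBs, so $\mathrm{UCB}_t^\text{u}(b)\geq \mathrm{UCB}_t^\text{u}(b^*)\geq u^*(v,b^*)$; combining with $\mathrm{UCB}_t^\text{u}(b)\leq u^*(v,b)+2w_t(b)$ forces $w_t(b)\geq\Delta(b)/2$, which on solving yields $n_{t-1}(b)\leq 32(m+1)U^2\ln T/(n\Delta(b)^2)$. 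Multiplying by $\Delta(b)$ and summing over $b$ produces the main term $\sum_b 32(m+1)U^2\ln T/(n\Delta(b))$. The complementary bad-event contribution, when combined with the per-step regret bound (bounded by the best-in-block gap, which averages to a factor of $1/m$ over the $m$ chosen bids), accounts for the $(\pi^2/3)\Delta(b)/m$ additive residual.

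The chief obstacle is threading the $\max$-over-$m$-blocks feedback through the CombUCB framework, which was originally designed for additive rewards: the standard identity ``per-step regret equals the sum of gaps of selected suboptimal arms'' no longer holds and must be replaced by the one-sided inequality $\max_j\widetilde{u}_t^j\geq\widetilde{u}_t^j$. The second piece of bookkeeping is to correctly carry the $(m+1)$ factor from the block-averaging through the Hoeffding radius into the final $\ln T$ term so as not to lose an extra $m$, while simultaneously ensuring the bad-event residual retains its $1/m$ factor; I expect this constant-tracking to be the only real subtlety beyond the standard template.
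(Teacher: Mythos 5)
There is a genuine gap, and it sits in your very first reduction. You bound the per-step gap $\min_{b\in\bvec_t}\Delta(b)$ by $\sum_{b\in\bvec_t,\,\Delta(b)>0}\Delta(b)$, arriving at $\E[R(T)]\leq\sum_{b:\Delta(b)>0}\Delta(b)\,\E[N_T(b)]$ with $N_T(b)$ the total number of rounds in which $b$ is selected. This discards exactly the information that makes the theorem true: because the observed reward is the $\max$ over the $m$ blocks, a round in which both $b$ and $b^*$ are selected contributes (essentially) zero pseudo-regret, yet your bound still charges it $\Delta(b)$. Since the algorithm submits $m$ distinct bids per round, at most one of which is $b^*$, we have $\sum_{b\neq b^*}N_T(b)\geq (m-1)T$, so for $m\geq 2$ (with a unique optimal bid) the right-hand side of your reduction grows linearly in $T$, and no correct bound on $\E[N_T(b)]$ can recover the stated logarithmic-in-$T$ guarantee. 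Correspondingly, your key counting claim is false: ``if a suboptimal $b\in\bvec_t$ then $b^*$ must have been displaced from the top-$m$ UCBs, so $\mathrm{UCB}^\text{u}_t(v,b)\geq\mathrm{UCB}^\text{u}_t(v,b^*)$'' fails whenever $b^*\in\bvec_t$ as well, which is the typical situation for $m\geq 2$; the implication is only valid on the event $\{b^*\notin\bvec_t\}$ (or trivially when $m=1$), and indeed $\E[N_T(b)]$ itself is $\Theta(T)$ for near-optimal suboptimal bids, so it cannot be bounded by $O(\ln T/\Delta(b)^2)$.

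The paper's proof avoids this by never dropping that event: it bounds the per-step regret by $\1\{b^*\notin\bvec_t\}\cdot\Delta(\bvec_t)$, where $\Delta(\bvec_t)$ is the minimum gap among the selected bids, and only then splits according to whether every selected bid has been sampled at least $8(m+1)(2U)^2\ln t/(n\Delta(b)^2)$ times. The under-sampled event yields the $32(m+1)U^2\ln T/(n\Delta(b))$ term, while on the well-sampled event the fact that \emph{all} $m$ selected bids must beat $b^*$ in UCB is exploited through $\1\{\forall b\in\bvec_t:\mathrm{UCB}^\text{u}_t(v,b)\geq\mathrm{UCB}^\text{u}_t(v,b^*)\}\leq\frac{1}{m}\sum_{b\in\bvec_t}\1\{\mathrm{UCB}^\text{u}_t(v,b)\geq\mathrm{UCB}^\text{u}_t(v,b^*)\}$ together with $\Delta(\bvec_t)\leq\Delta(b)$ and Lemma~\ref{lem:ucb-utility}; this is where the $\frac{\pi^2}{3}\cdot\frac{\Delta(b)}{m}$ residual and its $1/m$ factor actually come from. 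Your sketch gestures at a $1/m$ averaging in the bad-event term, but without retaining the indicator $\1\{b^*\notin\bvec_t\}$ there is no valid place in your argument for it to enter. If you redefine $N_T(b)$ to count only rounds with $b\in\bvec_t$ and $b^*\notin\bvec_t$ and carry that restriction through both the deviation and the counting steps, your plan essentially becomes the paper's proof.
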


\begin{algorithm}[H]
\begin{algorithmic}
\State \textbf{Input:} A finite set of sbids $B$, parameter $m, n$. $\forall b\in B, n_0(b)=1$
\State \textbf{Initialize:} Run $\init(B, m,n)$ algorithm to get $\widehat{g}_0$ and $\widehat{p}_0$. 
\For{$t = 1,\cdots, T$}
\State Update $\mathrm{UCB}$ terms on the expected utility of each bid $b$
\State $\forall b\in B, \mathrm{UCB}^\text{u}_t(v, b)\leftarrow \widehat{g}_{t-1}(b)\cdot v - \widehat{p}_{t-1}(b) + 2U\sqrt{\frac{2(m+1)\ln t}{n_{t-1}(b)\cdot n}}$
\State Generate a sequence of different $m$ bids $\bvec_t \in B^m$ to maximize
\begin{equation}
\sum_{b\in \bvec_t} \mathrm{UCB}^\text{u}_t(v, b)
\end{equation}
\For{$j = 1,\cdots,m$}
\State Update $n_t(b_t^j)\leftarrow n_{t-1}(b_t^j) + 1$
\State Observe $\tilde{g}_t^j(b_t^j)$ and $\tilde{p}^j_t(b_t^j)$
\State $$\widehat{g}_t(b_t^j) \leftarrow \left[\widehat{g}_{t-1}(b_t^j) \cdot n_{t-1}(b_t^j) + \tilde{g}^j_t(b_t^j)\right]\big/ n_t(b_t^j)$$
\State $$\widehat{p}_t(b_t^j) \leftarrow \left[\widehat{p}_{t-1}(b_t^j) \cdot n_{t-1}(b_t^j) + \tilde{p}^j_t(b_t^j)\right]\big/n_t(b_t^j)$$
\EndFor
\For{$b\notin \bvec_t$}
\State $n_t(b) \leftarrow n_{t-1}(b)$, $\widehat{g}_t(b)\leftarrow \widehat{g}_{t-1}(b)$, $\widehat{p}_t(b)\leftarrow \widehat{p}_{t-1}(b)$
\EndFor
\EndFor
\end{algorithmic}
\caption{$\rgtucb$ Algorithm for a known valuation.}\label{alg:rgtucb-multiple-bids}
\end{algorithm}

To prove the theorem, we rely on the following lemma%
, which is widely used in stochastic bandit literature. We refer the reader to check \cite{BCB2012} for more material.%
\begin{lemma}[\cite{BCB2012}]\label{lem:ucb-utility}
Fix a valuation $v$ and an iteration $t-1$ (where $t\geq 2$). If a bid $b\neq b^*$ (i.e. $u^*(v, b) < u^*(v, b^*)$) has been observed $n_{t-1}(b)\geq \frac{8(m+1)(2U)^2\ln t}{n\Delta(b)^2}$ times, then with probability of at least $1 - \frac{2}{t^2}$, $\mathrm{UCB}^\text{u}_t(v, b)\leq \mathrm{UCB}^\text{u}_t(v, b^*)$.
\end{lemma}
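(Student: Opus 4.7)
Abbreviate $\widehat{u}_{t-1}(v,b) \equiv \widehat{g}_{t-1}(b)\,v - \widehat{p}_{t-1}(b)$ and $c_t(b) \equiv 2U\sqrt{2(m+1)\ln t/(n_{t-1}(b)\,n)}$, so that $\mathrm{UCB}^{\text{u}}_t(v,b) = \widehat{u}_{t-1}(v,b) + c_t(b)$. I would run the classical two-event UCB template: identify a high-probability ``good event'' on which $\widehat{u}_{t-1}(v,b)$ does not overshoot $u^*(v,b)$ by more than $c_t(b)$ and $\widehat{u}_{t-1}(v,b^*)$ does not undershoot $u^*(v,b^*)$ by more than $c_t(b^*)$; then use the lower bound on $n_{t-1}(b)$ to force $c_t(b)\le \Delta(b)/2$, which is exactly enough to close the gap between the two UCB quantities.

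For the concentration step, define $E_1 = \{\widehat{u}_{t-1}(v,b) \le u^*(v,b) + c_t(b)\}$ and $E_2 = \{\widehat{u}_{t-1}(v,b^*) \ge u^*(v,b^*) - c_t(b^*)\}$. By the stochastic assumption, each time block that used a given bid contributes $n/(m+1)$ i.i.d.\ utility samples drawn from $\mathcal{F}$ and bounded in $[-U,U]$; conditional on $n_{t-1}(b)=k$, the estimate $\widehat{u}_{t-1}(v,b)$ is therefore the average of $N = kn/(m+1)$ i.i.d.\ samples of range $2U$, and a one-sided Hoeffding bound at radius $c_t(b)$ gives $\exp(-Nc_t(b)^2/(2U^2)) = \exp(-4\ln t) = t^{-4}$. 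A union bound over the at most $t$ possible realizations of $n_{t-1}(b)$ handles the random stopping time, and a further union bound over $\{E_1^c, E_2^c\}$ then delivers $\Pr[E_1^c \cup E_2^c] \le 2/t^2$ with room to spare.

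The remaining step is routine chaining. On $E_1 \cap E_2$, plugging the hypothesis $n_{t-1}(b) \ge 8(m+1)(2U)^2\ln t/(n\Delta(b)^2)$ directly into $c_t(b)^2 = 8U^2(m+1)\ln t/(n_{t-1}(b)\,n)$ yields $c_t(b) \le \Delta(b)/2$, and then
\[
\mathrm{UCB}^{\text{u}}_t(v,b) \;\le\; u^*(v,b) + 2c_t(b) \;\le\; u^*(v,b) + \Delta(b) \;=\; u^*(v,b^*) \;\le\; \widehat{u}_{t-1}(v,b^*) + c_t(b^*) \;=\; \mathrm{UCB}^{\text{u}}_t(v,b^*),
\]
which is the desired inequality.

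The only genuinely delicate point is the concentration: $n_{t-1}(b)$ is a random stopping time determined by the algorithm's past choices rather than a fixed sample count, so Hoeffding cannot be invoked as a black box. The standard remedy is the union bound over $k \in [t]$ used above (or, equivalently, a self-normalized Azuma-type inequality for adapted sequences); the $\ln t$ factor baked into $c_t(b)$ is calibrated precisely to absorb this extra factor of $t$ while still leaving a failure probability of order $1/t^2$.
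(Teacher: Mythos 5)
Your proof is correct: the constants work out ($c_t(b)\le\Delta(b)/2$ follows from the stated count condition since $c_t(b)^2=8U^2(m+1)\ln t/(n_{t-1}(b)\,n)$, and the Hoeffding exponent $4\ln t$ leaves room for the union bound over the at most $t$ realizations of $n_{t-1}(b)$), and you correctly account for each block contributing $n/(m+1)$ i.i.d.\ utility samples bounded in $[-U,U]$. The paper states this lemma without proof, citing \cite{BCB2012}, but your argument is exactly the standard one it relies on, and it matches the Hoeffding-plus-union-bound-over-the-counts template the paper does spell out for the analogous Lemma~\ref{lem:rgt-utility} in the DSP setting.
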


\subsection{Proof of Theorem~\ref{thm:rgt-bound-multiple-bids}}\label{sec:proof-rgt-bound1}
First we need some additional notations. Let $\zeta_t = \Big\{\forall b\in \bvec_t, n_{t-1}(b)\geq\frac{8(m+1)(2U)^2\ln t}{n\Delta(b)^2}\Big\}$ be the event that every bid in $\bvec_t$ has previously been used at least $\frac{8(m+1)(2U)^2\ln t}{n\Delta(b)^2}$ times at time $t$, $\Delta(\bvec_t) \triangleq u^*(v, b^*) - \max_{b \in \bvec_t} u^*(v, b)$\footnote{For notation simplicity, we still use $"\Delta"$ here to represent the minimum gap of the expected utility between the bids in $\bvec_t$ and the optimal bid $b^*$} \es{Change $\Delta$ to something else?}, and event $\zeta_{t, b} = \Big\{n_{t-1}(b)\geq\frac{8(m+1)(2U)^2\ln t}{n\Delta(b)^2}\Big\}$. The proof consists of three parts. First we upper bound the {\em Pseudo-Regret} in the following way,
\begin{equation}\label{eq:ub-pseudo-regret1}
\begin{aligned}
&\E[R(T)] = \max_{b\in B}\left[\sum_{t=1}^T u^*(v, b) - u^*(v, v)\right] - \E\left[\sum_{t=1}^T\widetilde{rgt}_t(v, \bvec_t)\right]\\
& = \max_{b\in B}\left[\sum_{t=1}^T g^*(b)\cdot v - p^*(b) - \Big(g^*(v)\cdot v - p^*(v)\Big)\right] \\
& \hspace{.5cm} - \E\Bigg[\sum_{t=1}^T\max_{j\in [m]}\Big\{\widetilde{g}^j_t(b_t^j)\cdot v - \widetilde{p}^j_t(b_t^j)\Big\} - \Big(\widetilde{g}^{m+1}_t(v)\cdot v - \widetilde{p}^{m+1}_t(v)\Big)\Bigg]\\
& \overset{(a)}{=} \max_{b\in B}\sum_{t=1}^T g^*(b)\cdot v - p^*(b) - \E\left[\sum_{t=1}^T \max_{j\in [m]}\Big\{\widetilde{g}^j_t(b_t^j)\cdot v - \widetilde{p}^j_t(b_t^j)\Big\}\right]\\
& \overset{(b)}{=} \sum_{t=1}^T g^*(b^*)\cdot v - p^*(b^*) - \sum_{t=1}^T \E\left[\E\Big[\max_{j\in [m]} \widetilde{g}^j_t(b_t^j)\cdot v - \widetilde{p}^j_t(b_t^j)\Big]\Big| \bvec_t\right]\\
& \overset{(c)}{\leq} \sum_{t=1}^T g^*(b^*)\cdot v - p^*(b^*) - \sum_{t=1}^T \E\left[\max_{j\in [m]} \E\Big[\widetilde{g}^j_t(b_t^j)\cdot v - \widetilde{p}^j_t(b_t^j)\Big| \bvec_t\Big]\right]\\
& = \sum_{t=1}^T u^*(b^*) - \E\left[\max_{j\in [m]} u^*(b_t^j)\right] \leq \E\left[\sum_{t=1}^T\1\big\{b^*\notin \bvec_t\big\}\cdot\Delta(\bvec_t)\right]
\end{aligned}
\end{equation}

Equality $(a)$ follows from the facts that $\E\left[\widetilde{g}^{m+1}_t(v)\right] = g^*$ and $\E\left[\widetilde{p}^{m+1}_t(v)\right] = p^*$.
For equality $(b)$, the outer expectation is over the history of the test bidder up to iteration $t-1$, which determines $b_t$ and the inner expectation is the expected maximum utility given $b_t$ (over the randomness of the auctions themselves). Inequality $(c)$ holds because $\E[\max\{X_1,\cdots,X_n\}] \geq \max\{\E[X_1],\cdots,\E[X_n]\}$.

Second, we decompose the upper bound of $\E[R(T)]$ in~(\ref{eq:ub-pseudo-regret1}) into two terms. Then we have
\begin{align*}
&\E[R(T)] \leq \E\left[\sum_{t=1}^T\1\big\{b^*\notin \bvec_t\big\}\cdot \Delta(\bvec_t)\right] = \underbrace{\E\left[\sum_{t=1}^T\1\big\{b^*\notin \bvec_t, \zeta_t\big\}\cdot \Delta(\bvec_t)\right]}_{\textbf{Term 1}} + \underbrace{\E\left[\sum_{t=1}^T\1\big\{b^*\notin \bvec_t, \neg \zeta_t\big\}\cdot \Delta(\bvec_t)\right]}_{\textbf{Term 2}}
\end{align*}
Then, we bound \textbf{Term 1} and \textbf{Term 2} respectively. First, we bound \textbf{Term 1} in Equations~(\ref{ie:fix-value-term1}). The first inequality is based on the fact that for any random variables $A_1,\cdot, A_n$ and $B$, 
\begin{eqnarray*}
\1\big\{\forall i, A_i\geq B\big\}\leq \sum_{i=1}^n \1\big\{A_i\geq B\big\}
\end{eqnarray*}
The second inequality is based on the exchange of two summands. The third inequality holds because
\begin{align*}
\E\left[ \1\big\{\mathrm{UCB}^\text{u}_t(v,b)\geq \mathrm{UCB}^\text{u}_t(b^*), \zeta_t\big\}\right]%
\leq \PP\big(\mathrm{UCB}^\text{u}_t(v,b)\geq \mathrm{UCB}^\text{u}_t(b^*)\big|\zeta_t\big),
\end{align*}
and the forth inequality is based on Lemma~\ref{lem:ucb-utility}.
\begin{align}
&\E\left[\sum_{t=1}^T\1\big\{b^*\notin \bvec_t, \zeta_t\big\}\cdot \Delta(\bvec_t)\right]\nonumber
\leq \E\left[\sum_{t=1}^T\1\big\{\forall b\in \bvec_t, \mathrm{UCB}^\text{u}_t(v,b)\geq \mathrm{UCB}^\text{u}_t(v,b^*), \zeta_t\big\}\cdot \Delta(\bvec_t)\right]\nonumber\\
& \leq\E\left[\sum_{t=1}^T\frac{1}{m}\sum_{b\in \bvec_t} \1\big\{\mathrm{UCB}^\text{u}_t(v,b)\geq \mathrm{UCB}^\text{u}_t(v,b^*), \zeta_t\big\}\cdot \Delta(b)\right]\nonumber\\
& \leq\frac{1}{m}\quad \smashoperator{\sum_{\substack{b\in B,\\ u^*(v,b) < u^*(v,b^*)}}}\quad \Delta(b)\cdot \E\left[\sum_{t=1}^T \1\big\{\mathrm{UCB}^\text{u}_t(v,b)\geq \mathrm{UCB}^\text{u}_t(v,b^*), \zeta_{t,b}\big\}\right]\nonumber\\
& \leq \frac{1}{m}\quad \smashoperator{\sum_{\substack{b\in B,\\ u^*(v,b) < u^*(v,b^*)}}}\quad \Delta(b)\cdot\sum_{t=1}^T \PP\left(\mathrm{UCB}^\text{u}_t(v,b)\geq \mathrm{UCB}^\text{u}_t(v, b^*)\Big|\zeta_{t,b}\right)\nonumber\\
& \leq\frac{1}{m}\sum_{\substack{b\in B,\\ u^*(v,b) < u^*(v,b^*)}}\Delta(b)\cdot\Big(1 + \sum_{t=2}^T \frac{2}{t^2}\Big)\nonumber\\
& \leq \frac{1}{m}\sum_{\substack{b\in B,\\ u^*(v,b) < u^*(v,b^*)}}\Delta(b)\cdot\frac{\pi^2}{3} \label{ie:fix-value-term1}
\end{align}
What remains is to bound \textbf{Term 2}. In the following, let $H= \frac{8(m+1)(2U)^2\ln t}{n\Delta(b)^2}$, and we bound \textbf{Term 2} as follows,

\begin{align}
&\E\left[\sum_{t=1}^T\1\big\{b^*\notin \bvec_t, \neg \zeta_t\big\}\cdot \Delta(\bvec_t)\right]\nonumber \leq \E\left[\sum_{t=1}^T \1\big\{b^*\notin \bvec_t, \exists b\in \bvec_t: n_{t-1}(b)\leq H\big\}\cdot \Delta(\bvec_t)\right] \nonumber\\
& \leq  \E\left[\sum_{t=1}^T \sum_{b\in \bvec_t} \1\big\{b\neq b^*, n_{t-1}(b)\leq H\big\}\cdot \Delta(b)\right]\nonumber \\
& \leq \sum_{\substack{b\in B,\\ u^*(v,b) < u^*(v,b^*)}} \Delta(b) \cdot \E\left[\sum_{t=1}^T \1\big\{n_{t-1}(b)\leq H\big\}\right]\nonumber\\
& \leq \sum_{\substack{b\in B,\\ u^*(v,b) < u^*(v,b^*)}} \frac{8(m+1)(2U)^2\ln T}{n\Delta(b)}\label{ie:fix-value-term2}
\end{align}
Combining the inequalities~(\ref{ie:fix-value-term1}) and (\ref{ie:fix-value-term2}), we complete the proof.

\subsection{Discussion}
The $\rgtucb$ algorithm can also be used to implement a low-regret bidding agent: Consider an advertiser who knows the valuation $v$ and wants to maximize the expected utility $u^*(v, b)$ by seeking for a best response bid $b$. Indeed, the advertiser can adopt the exact same algorithm -- $\rgtucb$ to maximize the utility. The analysis for regret bound in Section~\ref{sec:proof-rgt-bound1} also works when we change the reward function from IC regret to utility. 

The term $m$ appears in both terms in Theorem~\ref{thm:rgt-bound-multiple-bids}, hence we can pick it to minimize the asymptotic {\em Pseudo-Regret}:
\begin{corollary}\label{coro:fix-value-error}
Let $\overline{\Delta} \triangleq \max_{b\in B} \Delta(b)$ and $\underline{\Delta} \triangleq \min_{b\in B} \Delta(b)$. Choosing $m = \frac{\pi}{4U}\sqrt{\frac{n\overline{\Delta} \underline{\Delta}}{6\ln T}}$, the error of determining IC regret achieved by $\rgtucb$ algorithm is upper bounded by
\begin{equation*}
\mathcal{E}(T)\leq \frac{|B|}{T} \cdot \left(8\pi U\cdot\sqrt{\frac{\overline{\Delta}\ln T}{6n\underline{\Delta}}} + \frac{32U^2\ln T}{n \underline{\Delta}}\right) = O\left(\frac{|B|}{T}\cdot\Bigg(\frac{\ln T}{n} + \sqrt{\frac{\ln T}{n}}\Bigg)\right)
\end{equation*}
\end{corollary}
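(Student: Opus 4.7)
The plan is to derive the corollary as a purely algebraic consequence of Theorem~\ref{thm:rgt-bound-multiple-bids} by (i) crudely bounding each $\Delta(b)$ appearing in the per-bid summands by its extremes $\overline{\Delta}$ and $\underline{\Delta}$, (ii) upper-bounding the number of suboptimal bids by $|B|$, and (iii) optimizing over the free parameter $m$ using AM--GM.

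First, I would start from the bound in Theorem~\ref{thm:rgt-bound-multiple-bids}. In the ``variance'' term $\frac{32(m+1)U^2\ln T}{n\Delta(b)}$, I replace $\Delta(b)$ in the denominator by $\underline{\Delta}$ (making the term larger), and in the ``bias'' term $\frac{\pi^2}{3}\cdot\frac{\Delta(b)}{m}$ I replace $\Delta(b)$ in the numerator by $\overline{\Delta}$. Summing over at most $|B|$ suboptimal bids yields
\begin{equation*}
\E[R(T)] \;\leq\; |B|\cdot\left(\frac{32(m+1)U^2\ln T}{n\,\underline{\Delta}} \;+\; \frac{\pi^2\,\overline{\Delta}}{3m}\right).
\end{equation*}

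Second, I split $(m+1)$ into $m+1$ so that the right-hand side becomes $c_1 m + c_2/m + c_3$, where $c_1 = \tfrac{32U^2\ln T}{n\underline{\Delta}}$, $c_2 = \tfrac{\pi^2\overline{\Delta}}{3}$, and $c_3 = \tfrac{32U^2\ln T}{n\underline{\Delta}}$ (the ``$+1$'' residual, independent of $m$). The function $c_1 m + c_2/m$ is minimized at $m^\star = \sqrt{c_2/c_1}$, which simplifies to the stated choice $m = \frac{\pi}{4U}\sqrt{\frac{n\overline{\Delta}\underline{\Delta}}{6\ln T}}$, with minimum value $2\sqrt{c_1 c_2}$. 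Plugging in gives the leading term $\Theta\!\left(U\sqrt{\overline{\Delta}\ln T/(n\underline{\Delta})}\right)$ plus the residual $c_3 = \Theta(U^2\ln T/(n\underline{\Delta}))$, matching the two terms displayed in the corollary. Dividing by $T$ converts $\E[R(T)]$ into $\mathcal{E}(T)$ and reading off the asymptotic order yields the claimed $O\!\bigl(\tfrac{|B|}{T}(\tfrac{\ln T}{n} + \sqrt{\tfrac{\ln T}{n}})\bigr)$ bound, since $\overline{\Delta},\underline{\Delta},U$ are treated as constants in the $O(\cdot)$.

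There is essentially no conceptual obstacle; the only subtleties are bookkeeping. One small caveat worth mentioning is that $m^\star$ must lie in $[1,|B|-1]$ for the optimization to be honest: for very small $T$ (where $m^\star<1$) one should take $m=1$, and for the large-$T$ regime the assumption $m<|B|$ stated in the paper's footnote is what keeps the choice feasible. Neither affects the asymptotic statement, since the corollary is stated for $T$ growing, with $n$, $|B|$, and the gaps held fixed.
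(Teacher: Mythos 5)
Your proposal matches the paper's (implicit) argument exactly: bound each $\Delta(b)$ by $\underline{\Delta}$ in the $\ln T$ term and by $\overline{\Delta}$ in the $1/m$ term, multiply by at most $|B|$ suboptimal bids, split off the $+1$ residual, and balance $c_1 m + c_2/m$ at $m^\star=\sqrt{c_2/c_1}$, which is precisely the stated $m = \frac{\pi}{4U}\sqrt{\frac{n\overline{\Delta}\underline{\Delta}}{6\ln T}}$; your feasibility caveat ($1\le m<|B|$) is also the right one. The only discrepancy is a constant: balancing gives $2\sqrt{c_1c_2} = 16\pi U\sqrt{\overline{\Delta}\ln T/(6n\underline{\Delta})}$ per bid, whereas the corollary displays $8\pi U\sqrt{\overline{\Delta}\ln T/(6n\underline{\Delta})}$ (i.e., $\sqrt{c_1c_2}$), a factor-$2$ bookkeeping slip on the paper's side that does not affect the claimed asymptotic order.
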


\section{The DSP Problem}\label{sec:non-fix-value}
In this section, we consider the problem of determining the worst-case IC regret over all possible valuations $v$. Let $v^*$ and $b^*$ be the value and bid combination that yields the highest {\em IC regret},\footnote{We assume, without loss of generality, that this optimal combination is unique throughout the paper.} i.e. $(v^*, b^*) = \argmax_{v, b \in B} rgt(v, b)$, We modify our previous $\rgtucb$ algorithm in this setting and show the pseudocode in Algorithm~\ref{alg:rgtucb-multiple-bids-nonfixed-value}.

At each time $t$, the algorithm first computes the $\mathrm{UCB}$s on the expected {\em IC regret} of each valuation and bid pair $(v, b)$.
\begin{equation}\label{eq:ucb-rgt}
\mathrm{UCB}^\text{rgt}_t(v, b) = \widehat{rgt}_t(v, b) +  4U\sqrt{\frac{3(m+1)\ln t}{n\cdot \big(n_{t-1}(v)\wedge n_{t-1}(b)\big)}}
\end{equation}
where $\widehat{rgt}_t(v, b) = \widehat{g}_t(b)\cdot v -\widehat{p}_t(b) - \big(\widehat{g}_t(v)\cdot v -\widehat{p}_t(v)\big)$, "$\wedge$" is the $\min$ function, 
and the other notation $\widehat{g}_t$, $\widehat{p}_t$, $n_{t-1}(b)$, $n_{t-1}(v)$, and $U$ are identical to Section~\ref{sec:fix-value}. Then the algorithm selects $(v_t, b_t^1)$ to maximize the $\mathrm{UCB}^\text{rgt}_t$ term. Given the valuation $v_t$, the algorithm chooses other $m-1$ bids to achieve the $(m-1)-$largest $\mathrm{UCB}^\text{u}_t(v_t, \cdot)$ terms defined in Equation~\ref{eq:ucb-utility}. The rest of update steps in the algorithm are exactly the same as in Algorithm~\ref{alg:rgtucb-multiple-bids}.

\begin{algorithm}
\begin{algorithmic}
\State \textbf{Input:} A finite set of bids $B$, parameter $m, n$. $\forall b\in B, n_0(b)=1$
\State \textbf{Initialize:} Run $\init(B, m,n)$ algorithm to get $\widehat{g}_0$ and $\widehat{p}_0$. 
\For{$t = 1,\cdots, T$}
\State Update $\mathrm{UCB}^\text{rgt}$ terms of every $(v, b)$ pair
\begin{equation*}
\mathrm{UCB}^\text{rgt}_t(v, b) = \widehat{rgt}_t(v, b) + 4U\sqrt{\frac{3(m+1)\ln t}{n\cdot \big(n_{t-1}(v)\wedge n_{t-1}(b)\big)}}
\end{equation*}
\State Choose $(v_t, b_t^1)\in B\times B$ to maximize $\mathrm{UCB}^\text{rgt}_t(v, b)$
\If{$m \geq 2$}
\State Choose remaining $m-1$ bids $\{b_t^2,\cdots,b_t^m\}$\footnotemark to maximize $$\sum_{b\in \bvec_t\backslash b_t^1}\mathrm{UCB}^\text{rgt}(v_t, b)$$ %
\EndIf
\State Update $n_t(v_t) \leftarrow n_{t-1}(v_t) + 1$
\State Observe $\widetilde{g}^{m+1}_t(v_t)$ and update
$$\widehat{g}_t(v_t) \leftarrow \left[\widehat{g}_{t-1}(v_t)\cdot n_{t-1}(v_t) + \widetilde{g}^{m+1}_t(v_t)\right]\big/n_{t}(v_t)$$
$$\widehat{p}_t(v_t) \leftarrow \left[\widehat{p}_{t-1}(v_t)\cdot n_{t-1}(v_t) + \widetilde{p}^{m+1}_t(v_t)\right]\big/n_t(v_t)$$
\For{$j = 1,\cdots,m$}
\State Update $n_{t}(b_t^j) \leftarrow n_{t-1}(b_t^j) + 1$
\State Observe $\tilde{g}_t^j(b_t^j)$ and $\tilde{p}^j_t(b_t^j)$, then update
\State $$\widehat{g}_t(b_t^j) \leftarrow \left[\widehat{g}_{t-1}(b_t^j) \cdot n_{t-1}(b_t^j) + \widetilde{g}^j_t(b_t^j)\right]\big/n_t(b_t^j)$$
\State $$\widehat{p}_t(b_t^j) \leftarrow \left[\widehat{p}_{t-1}(b_t^j) \cdot n_{t-1}(b_t^j) + \widetilde{p}^j_t(b_t^j)\right]\big/n_t(b_t^j)$$
\EndFor
\For{$b\notin \bvec_t$}
\State $n_t(b) \leftarrow n_{t-1}(b)$
\EndFor
\EndFor
\end{algorithmic}
\caption{$\rgtucb$ Algorithm for unknown valuation.}\label{alg:rgtucb-multiple-bids-nonfixed-value}
\end{algorithm}
\footnotetext{We choose the remaining $m-1$ bids be different with each other and $b^1_t$.}

We denote $\Delta(v,b)\triangleq rgt(v^*, b^*) - rgt(v,b)$ and start with the following lemma that gives the concentration property of the $\mathrm{UCB}^\text{rgt}$ terms.
\begin{lemma}\label{lem:rgt-utility}
At iteration $t-1, (t\geq 2)$, for a (value, bid) pair $(v, b)\neq (v^*, b^*)$, where $v$ and $b$ are both observed at least $\frac{48(m+1)(2U)^2\ln t}{n\Delta(v, b)^2}$ times, then with probability at least $1 - \frac{4}{t^2}$, $\mathrm{UCB}^\text{rgt}_t(v, b)\leq \mathrm{UCB}^\text{rgt}_t(v^*, b^*)$.
\end{lemma}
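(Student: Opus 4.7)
The plan is to mimic the argument behind Lemma~\ref{lem:ucb-utility}, lifted from the ``utility UCB'' to the ``regret UCB''. Write $\widehat{u}_t(v, x) := \widehat{g}_t(x)\cdot v - \widehat{p}_t(x)$, so that $\widehat{rgt}_t(v, b) = \widehat{u}_t(v, b) - \widehat{u}_t(v, v)$ and analogously for $(v^*, b^*)$. Each $\widehat{u}_t(v, x)$ is the empirical mean of $n_{t-1}(x) \cdot n/(m+1)$ i.i.d.\ per-auction utilities, each lying in $[-U, U]$. Applying Hoeffding's inequality at tail $2/t^4$ gives
\[
\bigl|\widehat{u}_t(v, x) - u^*(v, x)\bigr| \le 2U\sqrt{\tfrac{2(m+1)\ln t}{n \cdot n_{t-1}(x)}} =: r_t(x)
\]
with probability at least $1 - 2/t^4$, exactly the single-utility concentration already used implicitly in the proof of Lemma~\ref{lem:ucb-utility}.

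Next I would take a union bound over the four relevant deviation events, corresponding to $x \in \{b, v, b^*, v^*\}$, which for $t \ge 2$ has total failure probability at most $8/t^4 \le 4/t^2$, delivering the probability guarantee stated in the lemma. On the complementary event, the triangle inequality gives
\[
\bigl|\widehat{rgt}_t(v, b) - rgt(v, b)\bigr| \le r_t(b) + r_t(v) \le 4U\sqrt{\tfrac{2(m+1)\ln t}{n \cdot (n_{t-1}(v) \wedge n_{t-1}(b))}},
\]
and identically for $(v^*, b^*)$. Since the exploration bonus in $\mathrm{UCB}^\text{rgt}$ carries the larger constant $\sqrt{3}$ in place of $\sqrt{2}$, it strictly dominates this deviation, so $\mathrm{UCB}^\text{rgt}_t(v^*, b^*) \ge rgt(v^*, b^*)$ and $\mathrm{UCB}^\text{rgt}_t(v, b) \le rgt(v, b) + 8U\sqrt{\tfrac{3(m+1)\ln t}{n \cdot (n_{t-1}(v) \wedge n_{t-1}(b))}}$.

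Finally I would plug in the sample-count hypothesis, rewritten as $n_{t-1}(v) \wedge n_{t-1}(b) \ge \frac{192(m+1)U^2 \ln t}{n\,\Delta(v,b)^2}$. A direct computation shows that the extra slack in the last display collapses to exactly $\Delta(v, b)$, which is precisely what the constants $\sqrt{3}$ in the padding and $48$ in the hypothesis were tuned for. Hence $\mathrm{UCB}^\text{rgt}_t(v, b) \le rgt(v, b) + \Delta(v, b) = rgt(v^*, b^*) \le \mathrm{UCB}^\text{rgt}_t(v^*, b^*)$, as required. The main subtlety compared with Lemma~\ref{lem:ucb-utility} is the need to concentrate two empirical utilities with possibly unequal sample counts $n_{t-1}(v)$ and $n_{t-1}(b)$; collapsing both onto the minimum inside the bonus is what drives the constants up from $\sqrt{2}$ and $8$ to $\sqrt{3}$ and $48$. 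The edge case $b = v$ or $b^* = v^*$, where two of the four empirical utilities coincide, only improves the union bound, so the argument is unchanged.
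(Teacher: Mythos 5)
Your proposal follows essentially the same route as the paper's proof: split $\widehat{rgt}$ into two empirical utilities, apply Hoeffding to each of the four quantities indexed by $b, v, b^*, v^*$, union bound, and then use the sample-count hypothesis to absorb the slack into $\Delta(v,b)$ so that $\mathrm{UCB}^\text{rgt}_t(v,b)\le rgt(v,b)+\Delta(v,b)=rgt(v^*,b^*)\le \mathrm{UCB}^\text{rgt}_t(v^*,b^*)$. Your final arithmetic (the bonus plus the $\sqrt{2}$-deviation summing to at most $8U\sqrt{3(m+1)\ln t/(n\,(n_{t-1}(v)\wedge n_{t-1}(b)))}$, which equals $\Delta(v,b)$ at the threshold $48(m+1)(2U)^2\ln t/(n\Delta(v,b)^2)$) is correct.

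The one real gap is in the probability bookkeeping: you apply Hoeffding to $\widehat{u}_t(v,x)$ treating $n_{t-1}(x)$ as a fixed sample size, but $n_{t-1}(x)$ is determined adaptively by the algorithm's past choices, which themselves depend on the observed allocations and payments. The paper handles this by conditioning on the event $\{n_{t-1}(b)=s,\, n_{t-1}(v)=\ell,\, n_{t-1}(b^*)=s^*,\, n_{t-1}(v^*)=\ell^*\}$ and union bounding over all (at most $t^4$) realizations of these counters; that is exactly why it engineers per-event tails of order $t^{-6}$ (via the deviation level $\Delta(v,b)/4$ and the $\sqrt{3}$ constant) so that the final bound lands at $4/t^2$. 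Your claimed total failure probability $8/t^4$ skips this union over counter values and is therefore not justified as stated. Fortunately the fix stays within your own constants: your per-counter-value tail is $2/t^4$, and each counter takes at most $t$ values, so the corrected union bound gives at most $8/t^3\le 4/t^2$ for $t\ge 2$, which still yields the lemma. So the conclusion and the structure stand, but you should make the conditioning on (and union over) the random counts explicit rather than folding the random $n_{t-1}(x)$ silently into a single Hoeffding application.
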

\begin{proof}
First, we denote $C_t(v,b) = \frac{48(m+1)(2U)^2\ln t}{n\Delta(v, b)^2}$ and observe $\widehat{rgt}_t(v,b) \in [-2U, 2U]$, each auction is i.i.d sampled from unknown distribution and each block contains $\frac{n}{m+1}$ auctions. Based on union bounds, i.e., inequality $(u)$, and Hoeffding bounds, i.e., inequality $(h)$, we have for any positive integers $s,\ell \geq C_t(v,b) = \frac{48(m+1)(2U)^2\ln t}{n\Delta(v, b)^2}$,
\begin{equation}\label{ie:hoeffding3}
\begin{aligned}
&\PP\left(\widehat{rgt}_t(v, b) - rgt(v, b)\geq \frac{\Delta(v, b)}{2}\Big| n_{t-1}(b)=s, n_{t-1}(v)=\ell\right)\\
&~~~~\overset{(u)}{\leq} \PP\left(\widehat{u}_t(v, b) - u^*(v, b)\geq \frac{\Delta(v, b)}{4}\Big| n_{t-1}(b)=s\right)+ \PP\left(\widehat{u}_t(v, v) - u^*(v, v)\geq \frac{\Delta(v, b)}{4}\Big| n_{t-1}(v)=\ell\right)\\
&~~~~\overset{(h)}{\leq} \exp\left(\frac{-2n\cdot s\cdot\Delta(v, b)^2}{16(m+1)\cdot(2U)^2}\right) + \exp\left(\frac{-2n\cdot \ell\cdot\Delta(v, b)^2}{16(m+1)\cdot(2U)^2}\right)\leq \frac{2}{t^6},
\end{aligned}
\end{equation}
Moreover, for any $s^*, \ell^*\in \mathbb{N}_{+}$, we have
\begin{equation}\label{ie:hoeffding4}
\begin{aligned}
&\PP\left(rgt(v^*, b^*) - \widehat{rgt}_t(v^*, b^*)\geq 4U\sqrt{\frac{3(m+1)\ln t}{n\cdot \big(n_{t-1}(v^*)\wedge n_{t-1}(b^*)\big)}} \Big|n_{t-1}(b^*)=s^*, n_{t-1}(v^*)=\ell^*\right)\\
& \overset{(u)}{\leq} \PP\left(u^*(v^*, b^*) - \widehat{u}_t(v^*, b^*)\geq 2U\sqrt{\frac{3(m+1)\ln t}{n\cdot \big(n_{t-1}(v^*)\wedge n_{t-1}(b^*)\big)}}\Big|n_{t-1}(b^*)=s^*\right)\\
& \hspace{.5cm} + \PP\left(u(v^*, v^*) - \widehat{u}_t(v^*, v^*)\geq 2U\sqrt{\frac{3(m+1)\ln t}{n\cdot \big(n_{t-1}(v^*)\wedge n_{t-1}(b^*)\big)}}\Big|n_{t-1}(v^*)=\ell^*\right)\\
& \overset{(h)}{\leq} \exp(-6\ln t) + \exp(-6\ln t) = \frac{2}{t^6}
\end{aligned}
\end{equation}
If the inequalities~(\ref{ie:hoeffding3}) and~(\ref{ie:hoeffding4}) are both violated, then it is trivial to show, given $n_{t-1}(b)=s, n_{t-1}(v)=\ell, n_{t-1}(b^*)=s^*$ and $n_{t-1}(v^*)=\ell^*$, where $s,\ell \geq C_t(v,b)$, $\mathrm{UCB}^\text{rgt}_t(v, b) \leq \mathrm{UCB}^\text{rgt}_t(v^*, b^*)$, since
\begin{align*}
\mathrm{UCB}^\text{rgt}_t(v, b) &\leq \widehat{rgt}_t(v,b)+\frac{\Delta(v, b)}{2}\leq rgt(v, b) + \Delta(b) = rgt(v^*, b^*) \leq \mathrm{UCB}^\text{rgt}_t(v^*, b^*)
\end{align*}
where the first inequality holds because $$n_{t-1}(v)\wedge n_{t-1}(b)\geq \frac{48(m+1)(2U)^2\ln t}{n\Delta(v, b)^2},$$ the second inequality holds if inequality~(\ref{ie:hoeffding3}) is violated and the final inequality is based on the violation of inequality~(\ref{ie:hoeffding4}). Let event $\zeta_t(s,\ell,s^*,\ell^*)$ be $\big\{n_{t-1}(b)=s, n_{t-1}(v)=\ell, n_{t-1}(b^*)=s^*, n_{t-1}(v^*)=\ell^*\big\}$ and by union bound, we have
\begin{align*}
&\PP\left(\mathrm{UCB}^\text{rgt}_t(v, b) \geq \mathrm{UCB}^\text{rgt}_t(v^*, b^*)\right)\\
&\leq \sum_{s=C_t(v,b)}^{t-1}\sum_{\ell=C_t(v,b)}^{t-1}\sum_{s^*=1}^{t-1}\sum_{\ell^*=1}^{t-1}\PP\left(\mathrm{UCB}^\text{rgt}_t(v, b) \geq \mathrm{UCB}^\text{rgt}_t(v^*, b^*)\Big|\zeta_t(s, \ell, s^*, \ell^*)\right)\\
&\leq \sum_{s=C_t(v,b)}^{t-1}\sum_{\ell=C_t(v,b)}^{t-1}\sum_{s^*=1}^{t-1}\sum_{\ell^*=1}^{t-1}\frac{4}{t^6} \leq \frac{4}{t^2}
\end{align*}
\end{proof}

Utilizing Lemma~\ref{lem:rgt-utility}, we show the worst case {\em Pseudo-Regret} analysis for IC-testing with non-fixed valuation setting in Theorem~\ref{thm:rgt-bound-multiple-bids-nonfixed-value}, where the full proof is shown in Appendix~\ref{app:proof-thm2}.
\begin{theorem}\label{thm:rgt-bound-multiple-bids-nonfixed-value}
$\rgtucb$ algorithm for unknown valuation setting (DSP problem) achieves pseudo-regret at most \begin{align*}
&\hspace{.3cm}\smashoperator{\sum_{\substack{v, b\in B,\\(v,b)\neq(v^*, b^*)}}} \frac{384(m+1)U^2\ln T}{n\Delta(v,b)}+ \frac{2\pi^2\Delta(v,b)}{3} \cdot\left(\1\big\{v\neq v^*\big\} + \frac{\1\big\{v= v^*\big\}}{m}\right)\end{align*}
\end{theorem}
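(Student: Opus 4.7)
The plan is to mimic the proof of Theorem~\ref{thm:rgt-bound-multiple-bids} but over value-bid \emph{pairs} $(v,b)$ rather than bids alone, and to carefully separate the behavior on rounds where $v_t=v^*$ (where we have $m$ chances to hit $b^*$) from the behavior on rounds where $v_t\neq v^*$ (where the single value slot must already match). I would first reproduce the chain of inequalities from (\ref{eq:ub-pseudo-regret1}): using $\E[\widetilde g^{m+1}_t(v_t)]=g^*(v_t)$ and $\E[\widetilde p^{m+1}_t(v_t)]=p^*(v_t)$ together with Jensen's inequality (moving $\max_j$ outside the inner expectation) to reduce the pseudo-regret to
\begin{equation*}
\E[R(T)] \;\le\; \E\!\left[\sum_{t=1}^T \1\{(v^*,b^*)\notin\{(v_t,b_t^j)\}_{j\in[m]}\}\cdot \Delta(v_t,\bvec_t)\right],
\end{equation*}
where $\Delta(v_t,\bvec_t)=rgt(v^*,b^*)-\max_{j\in[m]}rgt(v_t,b_t^j)$. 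I would then split this by the good event $\zeta_t=\{\forall b\in\bvec_t: n_{t-1}(b)\wedge n_{t-1}(v_t)\ge C_t(v_t,b)\}$ with $C_t(v,b)=\tfrac{48(m+1)(2U)^2\ln t}{n\Delta(v,b)^2}$, into Term 1 (on $\zeta_t$) and Term 2 (on $\neg\zeta_t$).

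For Term 1, the key step is to notice that the algorithm picks $(v_t,b_t^1)=\argmax_{(v,b)}\mathrm{UCB}^{\mathrm{rgt}}_t(v,b)$, so whenever $v_t\neq v^*$ we immediately have $\mathrm{UCB}^{\mathrm{rgt}}_t(v_t,b_t^1)\ge\mathrm{UCB}^{\mathrm{rgt}}_t(v^*,b^*)$; applying Lemma~\ref{lem:rgt-utility} and summing $\sum_{t\ge 2}4/t^2=2\pi^2/3-\text{const}$ yields a contribution of at most $\sum_{v\neq v^*,b}\Delta(v,b)\cdot\tfrac{2\pi^2}{3}$. On rounds where $v_t=v^*$ but $b^*\notin\bvec_t$, the selection rule forces every $b\in\bvec_t$ to satisfy $\mathrm{UCB}^{\mathrm{rgt}}_t(v^*,b)\ge\mathrm{UCB}^{\mathrm{rgt}}_t(v^*,b^*)$, and I would use the $\frac{1}{m}\sum_{b\in\bvec_t}\1\{\cdot\}\cdot\Delta(v^*,b)$ averaging trick exactly as in (\ref{ie:fix-value-term1}), then Lemma~\ref{lem:rgt-utility} again, to obtain $\tfrac{1}{m}\sum_{b\neq b^*}\Delta(v^*,b)\cdot\tfrac{2\pi^2}{3}$. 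Combining the two cases delivers the $\bigl(\1\{v\neq v^*\}+\tfrac{\1\{v=v^*\}}{m}\bigr)$ factor in the theorem.

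For Term 2, I would bound $\neg\zeta_t$ by a union over ``$n_{t-1}(b)$ insufficient'' and ``$n_{t-1}(v)$ insufficient''. Each suboptimal pair $(v,b)$ contributes $\sum_{t=1}^T\1\{n_{t-1}(b)\wedge n_{t-1}(v)\le C_t(v,b)\}\le 2\,C_T(v,b)$ to the count of bad pulls (the factor $2$ coming from separately counting bid insufficiencies and value insufficiencies), which multiplied by $\Delta(v,b)$ yields the $\tfrac{384(m+1)U^2\ln T}{n\Delta(v,b)}$ term. The main obstacle, and the only real departure from Section~\ref{sec:proof-rgt-bound1}, is the careful case split in Term 1: one must recognize that the algorithm's two-stage selection rule (first $(v_t,b_t^1)$ globally, then $b_t^{2},\dots,b_t^m$ conditional on $v_t$) produces different UCB inequalities depending on whether $v_t=v^*$, and that Lemma~\ref{lem:rgt-utility} must be applied once per round when $v_t\neq v^*$ but averaged over $m$ slots when $v_t=v^*$, which is precisely what produces the asymmetric indicator in the final bound.
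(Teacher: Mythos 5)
Your proposal follows essentially the same route as the paper's own proof in Appendix~A.1: the identical reduction of the pseudo-regret via Jensen's inequality to $\E\bigl[\sum_t \1\{(v^*,b^*)\notin(v_t,\bvec_t)\}\Delta(v_t,\bvec_t)\bigr]$, the same good event $\zeta_t$ with threshold $C_t(v,b)=\tfrac{48(m+1)(2U)^2\ln t}{n\Delta(v,b)^2}$, the same case split of Term~1 into $v_t\neq v^*$ (one application of Lemma~\ref{lem:rgt-utility} per round) versus $v_t=v^*,\,b^*\notin\bvec_t$ (averaging over the $m$ slots), and the same doubled counting of value/bid insufficiencies in Term~2 giving $2\,C_T(v,b)\Delta(v,b)=\tfrac{384(m+1)U^2\ln T}{n\Delta(v,b)}$. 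The argument and constants match the paper's proof, so the proposal is correct.
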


Following the same argument in Section~\ref{sec:fix-value}, if we choose $m$ appropriately, we will get the following asymptotic bound. 
\begin{corollary}\label{coro:non-fix-value-error}
Let $\overline{\Delta} \triangleq \max_{v, b\in B} \Delta(v, b)$ and $\underline{\Delta} \triangleq \min_{v, b\in B} \Delta(v, b)$. Choosing $m = \frac{\pi}{24U}\sqrt{\frac{n\overline{\Delta} \underline{\Delta}}{|B|\ln T}}$, the asymptotic error of determining IC regret achieved by $\rgtucb$ algorithm is upper bounded by
\begin{align*}
\mathcal{E}(T)  %
\leq O\left(\frac{|B|}{T}\cdot\left(\sqrt{\frac{|B|\ln T}{n}} + \frac{|B|\ln T}{n}\right)\right)
\end{align*}
\end{corollary}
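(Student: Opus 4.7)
My plan is to derive the corollary directly from Theorem~\ref{thm:rgt-bound-multiple-bids-nonfixed-value} by splitting the right-hand side into three pieces according to the indicator factor $\mathbbm{1}\{v \neq v^*\} + \mathbbm{1}\{v=v^*\}/m$ and identifying which pieces depend on $m$. Writing $N = |B|^2$ for the total number of $(v,b)$ pairs (excluding $(v^*,b^*)$) and using $\Delta(v,b) \geq \underline{\Delta}$ and $\Delta(v,b)\leq \overline{\Delta}$, the bound splits into
\begin{equation*}
\underbrace{\sum_{(v,b)\neq(v^*,b^*)}\tfrac{384(m+1)U^2\ln T}{n\Delta(v,b)}}_{\text{(A): grows linearly in } m} \;+\; \underbrace{\tfrac{2\pi^2}{3}\sum_{v\neq v^*,\,b}\Delta(v,b)}_{\text{(B): independent of } m} \;+\; \underbrace{\tfrac{2\pi^2}{3m}\sum_{b\neq b^*}\Delta(v^*,b)}_{\text{(C): decays as } 1/m}.
\end{equation*}
The first step is to upper bound (A) by $384|B|^2(m+1)U^2\ln T/(n\underline{\Delta})$, (B) by $\tfrac{2\pi^2}{3}|B|^2\overline{\Delta}$, and (C) by $\tfrac{2\pi^2}{3m}|B|\overline{\Delta}$.

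Next I would optimize $m$ by balancing the only two summands that involve it, namely (A) and (C). Setting the leading $m$ and $1/m$ coefficients equal, i.e.\ $384|B|^2 m U^2\ln T/(n\underline{\Delta}) \asymp \tfrac{2\pi^2}{3}|B|\overline{\Delta}/m$, yields $m^\star = \Theta\!\left(\tfrac{1}{U}\sqrt{n\overline{\Delta}\underline{\Delta}/(|B|\ln T)}\right)$, which is exactly the choice $m = \tfrac{\pi}{24U}\sqrt{n\overline{\Delta}\underline{\Delta}/(|B|\ln T)}$ stated in the corollary. Substituting this $m$ back into (A) and (C) each produces a contribution of order $U|B|^{3/2}\sqrt{\overline{\Delta}\ln T/(n\underline{\Delta})}$, i.e.\ $O\!\left(|B|\sqrt{|B|\ln T/n}\right)$. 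The piece (B) is $O(|B|^2)$, independent of $T$, so after dividing by $T$ it is dominated by the other terms (in particular by the $O(|B|^2\ln T/(nT))$ residual coming from the additive ``$+1$'' in $(m+1)$ inside (A)).

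The final step is to divide the combined bound by $T$ to pass from pseudo-regret to $\mathcal{E}(T)$, collect the two surviving orders of magnitude—one from the balanced (A)+(C) giving $\tfrac{|B|}{T}\sqrt{|B|\ln T/n}$ and one from the $(m{+}1)\mapsto 1$ slack in (A) giving $\tfrac{|B|}{T}\cdot\tfrac{|B|\ln T}{n}$—and verify that (B)$/T = O(|B|^2/T)$ is absorbed into the $\tfrac{|B|^2\ln T}{nT}$ term for all sufficiently large $T$. I anticipate no real obstacle beyond bookkeeping of the constants $U$, $\overline{\Delta}$, $\underline{\Delta}$: the mild subtlety is ensuring that $m$ is chosen large enough that $m+1 \leq 2m$ (so that the optimization is valid) while still respecting the standing assumption $m < |B|$, which holds in the regime $n\overline{\Delta}\underline{\Delta} \ll U^2 |B|^3 \ln T$ where the asymptotic rate is meaningful.
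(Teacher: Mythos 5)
Your proposal is correct and follows essentially the same route the paper intends: the paper gives no separate proof of this corollary, merely invoking "the same argument as in Section 3," which is exactly what you carry out — bound each gap by $\overline{\Delta}$ or $\underline{\Delta}$, count the $O(|B|^2)$ pairs (and the $O(|B|)$ pairs with $v=v^*$), and choose $m$ to balance the $m$-growing and $1/m$-decaying terms, which indeed reproduces the stated $m = \frac{\pi}{24U}\sqrt{\frac{n\overline{\Delta}\,\underline{\Delta}}{|B|\ln T}}$ and the claimed rate after dividing by $T$. Your explicit handling of the $m$-independent $O(|B|^2/T)$ piece (absorbed for sufficiently large $T$, consistent with the "asymptotic" qualifier) and of the constraint $m<|B|$ is careful bookkeeping the paper leaves implicit.
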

Note in the DSP problem,  since we consider determining the worst-case IC regret over all possible valuations $v$, our asymptotic error bound quadratically grows as $|B|$, while for the advertiser problem the error linearly grows as $|B|$ since the benchmark we consider there is weaker than it in the DSP problem.

\section{Extensions and Discussions}
\subsection{Switching value and bid}
In Section~\ref{sec:non-fix-value}, we assume the test bidder (DSP) only chooses one valuation $v_t$ at every time step $t$ to empirically estimate IC regret. However, in practice, for each (value, bid) pair, the test bidder can switch the roles of bid and value to better estimate the IC regret.

Since there is no conceptual difference between bid and value in this setting, at time step $t$, the test bidder also submit a "bid" $b^{m+1}_t$ for every auction in block $m+1$. We still denote the bids submitted at time $t$ be $\bvec_t=\big\{b^1_t, b^2_t,\cdots, b^{m+1}_t\big\}$. The only difference between this case and our standard model is from the definition of {\em empirical IC regret}, we define the modified {\em empirical IC regret} in this setting as below,

\begin{definition}[modified empirical IC regret]
$$\widetilde{rgt}_t(\bvec_t) = \max_{i, j\in[m+1], i\neq j}\left\{\widetilde{u}^j_t (b_t^i, b^j_t) - \widetilde{u}^i_t (b^i_t, b^i_t)\right\}$$
\end{definition}

where we consider all possible (value, bid) pair from $b_t$. The learning task in this setting is to design an efficient algorithm to generate $m+1$ bids at every time step to minimize {\em Pseudo-Regret}, $$\E\left[R(T)\right]\triangleq  \max_{v, b\in B}\sum_{t=1}^T rgt(v, b) - \E\left[\sum_{t=1}^T\widetilde{rgt}_t(\bvec_t)\right]$$ We extend our $\rgtucb$ algorithm with fixed valuation (Algorithm~\ref{alg:rgtucb-multiple-bids}) for this setting and show the pseudo-code in Algorithm~\ref{alg:rgtucb-multiple-bids-switching}. Very similar to Algorithm~\ref{alg:rgtucb-multiple-bids}, at each time step, we update the $\mathrm{UCB}^\text{rgt}$ terms defined in Equation~\ref{eq:ucb-rgt} for every $(b_1, b_2)$ bids pair. Given these $\mathrm{UCB}^\text{rgt}$ terms, we run $\genbids$ algorithm proposed in Algorithm ~\ref{alg:generate-bids} to generate $m+1$ bids, which aims to select the (value, bid) pairs with the largest $\mathrm{UCB}^{\text{rgt}}$ terms.

\begin{algorithm}[H]
\begin{algorithmic}
\State \textbf{Input:} A finite set of bids $B$, parameter $m, n$. $\forall b\in B, n_0(b)=1$
\State \textbf{Initialize:} Run $\init(B, m,n)$ algorithm to get $\hat{g}_0$ and $\hat{p}_0$. 
\For{$t = 1,\cdots, T$}
\State Update $\mathrm{UCB}$ terms on the expected IC regret of every $(b_1, b_2)$ bids pair
\State $\mathrm{UCB}^\text{rgt}_t(b_1, b_2) = \widehat{rgt}_t(b_1, b_2) + 4U\sqrt{\frac{3(m+1)\ln t}{n\cdot \big(n_{t-1}(b_1)\wedge n_{t-1}(b_2)\big)}}$
\State Run $\genbids$ Algorithm on $\mathrm{UCB}^\text{rgt}_t$ terms to generate $m+1$ bids, which forms $\bvec_t=\big\{b_t^1,\cdots,b_t^{m+1}\big\}$, where $\genbids$ is defined in Algorithm~\ref{alg:generate-bids}.
\For{$j = 1,\cdots,m+1$}
\State Update $n_t(b_t^j)\leftarrow n_{t-1}(b_t^j) + 1$
\State Observe $\tilde{g}_t^j(b_t^j)$ and $\tilde{p}^j_t(b_t^j)$
\State $$\hat{g}_t(b_t^j) \leftarrow \left[\hat{g}_{t-1}(b_t^j) \cdot n_{t-1}(b_t^j) + \tilde{g}^j_t(b_t^j)\right]\big/n_t(b_t^j)$$
\State $$\hat{p}_t(b_t^j) \leftarrow \left[\hat{p}_{t-1}(b_t^j) \cdot n_{t-1}(b_t^j) + \tilde{p}^j_t(b_t^j)\right]\big/n_t(b_t^j)$$
\EndFor
\For{$b\notin \bvec_t$}
\State $n_t(b) \leftarrow n_{t-1}(b)$, $\hat{g}_t(b)\leftarrow \hat{g}_{t-1}(b)$, $\hat{p}_t(b)\leftarrow \hat{p}_{t-1}(b)$
\EndFor
\EndFor
\end{algorithmic}
\caption{$\rgtucb$ Algorithm for IC testing when allowing switching value and bid.}\label{alg:rgtucb-multiple-bids-switching}
\end{algorithm}

\begin{algorithm}[H]
\begin{algorithmic}
\State \textbf{Input:} A finite set of bids $B$, $\mathrm{UCB}^\text{rgt}$ terms for every $(b_1, b_2)$ bids pair, where $b_1, b_2\in B$. Initialize a set $B_0=\{ \}$.
\While{$|B_0| < m+1$}
\State Choose the largest $\mathrm{UCB}^\text{rgt}(b_1, b_2)$ such that $b_1\notin B_0$ or $b_2\notin B_0$.
\State Update $B_0 \leftarrow B_0\cup \{b_1, b_2\}$. \footnotemark
\EndWhile
\end{algorithmic}
\caption{$\genbids$ Algorithm for generating bids given $\mathrm{UCB}^\text{rgt}$ terms.}\label{alg:generate-bids}
\end{algorithm}
\footnotetext{If the two bids chosen in the last round are both not in $B_0$, we randomly choose one to add it in $B_0$.}

We show the {\em Pseudo-Regret} guarantee for this setting in Theorem~\ref{thm:rgt-bound-switching}, the proof is in Appendix~\ref{app:proof-thm-switching}.

\begin{theorem}\label{thm:rgt-bound-switching}
$\rgtucb$ algorithm for DSP problem when allowing switching value and bid achieves pseudo-regret at most $$\sum_{\substack{b_1, b_2\in B,\\ (b_1, b_2) \neq (v^*, b^*)}} \frac{192(m+1)U^2\ln T}{n\Delta(b_1,b_2)} +  \frac{2\pi^2\Delta(b_1,b_2)}{3(m+1)^2}$$
where $\Delta(b_1, b_2)=rgt(v^*, b^*) - rgt(b_1, b_2)$.
\end{theorem}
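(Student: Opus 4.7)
The plan is to adapt the proof template of Theorems~\ref{thm:rgt-bound-multiple-bids} and~\ref{thm:rgt-bound-multiple-bids-nonfixed-value} to the symmetric setting in which both arguments of $rgt$ are now drawn from the same vector $\bvec_t$ via $\genbids$. First I would mirror the chain of inequalities in~\eqref{eq:ub-pseudo-regret1}: condition on $\bvec_t$, apply $\E[\max_i X_i]\ge \max_i\E[X_i]$ to the empirical IC regret $\widetilde{rgt}_t(\bvec_t)$, and use the stochastic assumption to eliminate the inner expectations over the auction noise. This yields the starting bound
\begin{equation*}
\E[R(T)] \;\le\; \E\!\left[\sum_{t=1}^T \Delta(\bvec_t)\cdot\1\!\left\{(v^*,b^*)\notin \mathcal{P}(\bvec_t)\right\}\right],
\end{equation*}
where $\mathcal{P}(\bvec_t)=\{(b_t^i,b_t^j):i,j\in[m+1],\,i\neq j\}$ is the set of ordered pairs extracted from $\bvec_t$ and $\Delta(\bvec_t)=rgt(v^*,b^*)-\max_{(b_1,b_2)\in\mathcal{P}(\bvec_t)}rgt(b_1,b_2)$.

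Next I would introduce the ``sufficient exploration'' event $\zeta_t=\bigl\{\forall (b_1,b_2)\in\mathcal{P}(\bvec_t):\ n_{t-1}(b_1)\wedge n_{t-1}(b_2)\ge\tfrac{48(m+1)(2U)^2\ln t}{n\Delta(b_1,b_2)^2}\bigr\}$ and split the sum into \textbf{Term~1} (on $\zeta_t$) and \textbf{Term~2} (on $\neg\zeta_t$), exactly as in Section~\ref{sec:proof-rgt-bound1}. For \textbf{Term~1}, the key structural observation is: if $(v^*,b^*)\notin\mathcal{P}(\bvec_t)$, then $v^*\notin\bvec_t$ or $b^*\notin\bvec_t$, so the pair $(v^*,b^*)$ stays \emph{eligible} at every iteration of $\genbids$; by the greedy selection rule, every pair $(c_s^1,c_s^2)$ that $\genbids$ actually picks ($s=1,\dots,K$) must then satisfy $\mathrm{UCB}^{\text{rgt}}_t(c_s^1,c_s^2)\ge \mathrm{UCB}^{\text{rgt}}_t(v^*,b^*)$. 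Combining this with $\Delta(\bvec_t)\le\Delta(c_s^1,c_s^2)$, amortizing over the $K=\Theta(m+1)$ chosen pairs, and invoking Lemma~\ref{lem:rgt-utility} (which bounds each individual UCB-comparison event by $4/t^2$ on $\zeta_t$), the tail series $\sum_{t\ge 2}4/t^2\le 2\pi^2/3$ then produces the $\Delta(b_1,b_2)/(m+1)^2$ summand of the theorem after summing over all suboptimal pairs.

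For \textbf{Term~2} I would mimic~\eqref{ie:fix-value-term2}: on $\neg\zeta_t$ some bid in $\bvec_t$ has been observed fewer than the required threshold number of times, so swapping the order of summation and counting the indicator over $t$ gives the $\frac{192(m+1)U^2\ln T}{n\Delta(b_1,b_2)}$ summand, with the factor $192=4\cdot 48$ and the extra $(m+1)$ inherited from the inflated confidence radius inside $\mathrm{UCB}^{\text{rgt}}_t$ in~\eqref{eq:ucb-rgt}. The main obstacle is the amortization inside \textbf{Term~1}. In Theorem~\ref{thm:rgt-bound-multiple-bids} all $m$ bids in $\bvec_t$ inherit the top-$m$ UCB certificate simultaneously, whereas here only the $K\le m+1$ pairs physically chosen by $\genbids$ are \emph{directly} certified to dominate $(v^*,b^*)$ on UCB: a pair $(b_t^i,b_t^j)\in\mathcal{P}(\bvec_t)$ can sit in $\mathcal{P}(\bvec_t)$ ``for free'' because $b_t^i$ and $b_t^j$ were added to $B_0$ in different greedy iterations and need not themselves dominate $\mathrm{UCB}^{\text{rgt}}_t(v^*,b^*)$. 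The reduction therefore has to be routed through the chosen pairs $(c_s^1,c_s^2)$ rather than through the whole of $\mathcal{P}(\bvec_t)$, using the fact that each suboptimal ordered pair is selected at most once per invocation of $\genbids$ so the per-pair union bound over $t$ is not double-counted.
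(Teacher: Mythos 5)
Your proposal follows the paper's skeleton exactly up to the critical step: the same reduction of $\E[R(T)]$ to $\E\big[\sum_{t=1}^T\1\{(v^*,b^*)\notin\bvec_t\otimes\bvec_t\}\cdot\Delta(\bvec_t)\big]$, the same exploration event $\zeta_t$, the same use of Lemma~\ref{lem:rgt-utility} with the $4/t^2$ tail, and the same counting for \textbf{Term 2}, which indeed yields $\sum_{b_1,b_2}192(m+1)U^2\ln T/(n\Delta(b_1,b_2))$. The divergence is in \textbf{Term 1}. The paper bounds it by averaging over \emph{all} $m(m+1)$ ordered pairs of selected bids, i.e.\ it uses $\Delta(\bvec_t)\le\frac{1}{m(m+1)}\sum_{b_1\neq b_2\in\bvec_t}\Delta(b_1,b_2)\1\{\mathrm{UCB}^\text{rgt}_t(b_1,b_2)\ge\mathrm{UCB}^\text{rgt}_t(v^*,b^*)\}$, and this is precisely where the $(m+1)^2$ (up to the paper's own $m(m+1)$ versus $(m+1)^2$ slack) in the theorem comes from. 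That step implicitly treats every pair in $\bvec_t\otimes\bvec_t$ as UCB-dominating $(v^*,b^*)$ --- exactly the point you flag as not certified by $\genbids$ for pairs whose two bids were added in different greedy iterations. So your structural observation about the selection rule is legitimate.

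The problem is that your repair does not deliver the stated bound. If you route the amortization only through the $K$ pairs that $\genbids$ actually picks, then, since each iteration adds at most two new bids, $K=\Theta(m+1)$, and $\Delta(\bvec_t)\le\frac{1}{K}\sum_{s=1}^{K}\Delta(c_s^1,c_s^2)\1\{\mathrm{UCB}^\text{rgt}_t(c_s^1,c_s^2)\ge\mathrm{UCB}^\text{rgt}_t(v^*,b^*)\}$ only buys a factor $1/K=\Theta(1/(m+1))$. Summing the $2\pi^2/3$ tail per suboptimal ordered pair then gives \textbf{Term 1} of order $\sum_{b_1,b_2}\Delta(b_1,b_2)/(m+1)$, which is weaker than the theorem's $\sum_{b_1,b_2}2\pi^2\Delta(b_1,b_2)/\big(3(m+1)^2\big)$ by a factor of $m+1$; your remark that each ordered pair is selected at most once per invocation only prevents double counting over $t$ and contributes no extra $1/(m+1)$. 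As described, your argument therefore proves a strictly weaker statement. To recover the $(m+1)^{-2}$ factor you must either justify the paper's implicit claim that \emph{every} pair in $\bvec_t\otimes\bvec_t$ dominates $(v^*,b^*)$ in $\mathrm{UCB}^\text{rgt}_t$ on the relevant event (which, as you yourself argue, the greedy rule does not guarantee), or supply a genuinely different accounting; asserting that the chosen-pair amortization ``produces the $\Delta(b_1,b_2)/(m+1)^2$ summand'' leaves the key quantitative step unproven.
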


Note the {\em Pseudo-Regret} bound achieved by Algorithm~\ref{alg:rgtucb-multiple-bids-switching} is always better than it achieved by Algorithm~\ref{alg:rgtucb-multiple-bids-nonfixed-value} for any $m$. Then we can always get a better asymptotic error bound by Algorithm~\ref{alg:rgtucb-multiple-bids-switching} for the DSP problem. %
\begin{corollary}
Let $\overline{\Delta} \triangleq \max_{v, b\in B} \Delta(v, b)$ and $\underline{\Delta} \triangleq \min_{v, b\in B} \Delta(v, b)$. Choosing $m+1 = O\left(\frac{n}{\ln T}\right)^{1/3}$ the asymptotic error of determining IC regret achieved by $\rgtucb$ algorithm is upper bounded by\footnote{This asymptotic error bound can be achieved if the optimal $m$ can be chosen to satisfy the constraint that $m < |B|$.}
\begin{align*}
\mathcal{E}(T) &\leq O\left(\frac{|B|^2(\ln T)^{2/3}}{n^{2/3}T}\right)
\end{align*}
\end{corollary}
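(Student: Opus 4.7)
The plan is to start directly from the Pseudo-Regret bound in Theorem~\ref{thm:rgt-bound-switching}, namely
$$\E[R(T)] \leq \sum_{(b_1,b_2)\neq (v^*,b^*)} \left(\frac{192(m+1)U^2\ln T}{n\Delta(b_1,b_2)} +  \frac{2\pi^2\Delta(b_1,b_2)}{3(m+1)^2}\right).$$
I will first loosen every gap in the sum: replace $\Delta(b_1,b_2)$ in the denominator of the first term by $\underline{\Delta}$ and $\Delta(b_1,b_2)$ in the numerator of the second term by $\overline{\Delta}$. Since there are at most $|B|^2$ pairs $(b_1,b_2)$, this yields
$$\E[R(T)] \leq O\!\left(\frac{|B|^2 (m+1)\, U^2 \ln T}{n\,\underline{\Delta}} + \frac{|B|^2\,\overline{\Delta}}{(m+1)^2}\right),$$
where $U$, $\overline{\Delta}$, $\underline{\Delta}$ are treated as constants for the asymptotic statement.

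Next, I will optimize $m$ by balancing the two terms. Setting $\frac{(m+1)\ln T}{n} \asymp \frac{1}{(m+1)^2}$ gives $(m+1)^3 \asymp n/\ln T$, i.e.\ the prescribed choice $m+1 = O\bigl((n/\ln T)^{1/3}\bigr)$. Plugging this value back in, both terms become of order $|B|^2 (\ln T)^{2/3}/n^{2/3}$, so the Pseudo-Regret is at most $O\!\left(|B|^2 (\ln T)^{2/3}/n^{2/3}\right)$. Dividing by $T$ then gives $\mathcal{E}(T) = \E[R(T)]/T \leq O\!\left(\frac{|B|^2 (\ln T)^{2/3}}{n^{2/3}\,T}\right)$, which is exactly the claimed bound.

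There is no real technical obstacle—this is essentially a two-term AM/GM balancing on top of Theorem~\ref{thm:rgt-bound-switching}, mirroring the way Corollary~\ref{coro:fix-value-error} and Corollary~\ref{coro:non-fix-value-error} are derived from their respective theorems. The one subtlety, which the footnote in the statement already flags, is feasibility of the optimizer: the standing assumption is $m<|B|$, so the optimal $m+1 \asymp (n/\ln T)^{1/3}$ may be used only in the regime $(n/\ln T)^{1/3}\lesssim |B|$. In the complementary regime, $m$ saturates at $|B|-1$ and the bound must be restated with this cap; within the intended regime (many auctions per step, moderate bid discretization) the stated rate holds as advertised.
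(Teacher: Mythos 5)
Your proposal is correct and follows essentially the same route the paper intends: bound the sum over the at most $|B|^2$ pairs using $\underline{\Delta}$ and $\overline{\Delta}$, balance the $(m+1)\ln T/n$ term against the $1/(m+1)^2$ term to get $m+1 \asymp (n/\ln T)^{1/3}$, and divide by $T$, exactly as Corollaries~\ref{coro:fix-value-error} and~\ref{coro:non-fix-value-error} are obtained from their theorems. Your remark on the feasibility cap $m<|B|$ matches the paper's footnote, so nothing is missing.
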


\subsection{Continuous bids space}\label{sec:continuous-bid-space}
Thus far, we have considered a discrete bid space. We now extend our framework to allow the advertiser to bid in a continuous bid space $\mathcal B$ (e.g. a uniform interval $[0, 1]$). Utilizing the discretization result in \cite{K05, KSU08, FPS18}, let $B$ be the discretization of continuous bid space $\mathcal B$ and $\mathcal{DE}(B,\mathcal B)$ to represent the discretization error of bid space $B$ and $\mathcal B$ such that $\mathcal{DE}(B, \mathcal B) = \sup_{v,b \in \mathcal B}\sum_{t=1}^T rgt(v, b) - \sup_{v',b' \in B}\sum_{t=1}^T rgt(v', b')$. To avoid the conflict, we denote the {\em Pseudo-Regret} in $\mathcal B$ be $\E\left[R(T, \mathcal B)\right]$ and the {\em Pseudo-Regret} in $B$ be $\E\left[R(T, B)\right]$. The following lemma uncovers the relationship of $\E\left[R(T, \mathcal B)\right]$ and $\E\left[R(T, B)\right]$
\begin{lemma}[\cite{K05,KSU08}]
Let $B$ be a discretization of a continuous bid space $\mathcal B$, then
\begin{align*}
\E[R(T, \mathcal B)] \leq \E[R(T, B)] + \mathcal{DE}(B, \mathcal B)
\end{align*}
\end{lemma}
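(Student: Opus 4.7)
The plan is a short algebraic decomposition; there is no probabilistic argument needed. The key observation is that the \rgtucb{} algorithm always chooses its bids $(v_t, \bvec_t)$ from the discretized set $B$, so the realized empirical regret $\E\bigl[\sum_{t=1}^T \widetilde{rgt}_t(v_t,\bvec_t)\bigr]$ is the same quantity appearing in both $\E[R(T,\mathcal B)]$ and $\E[R(T,B)]$. Only the benchmark term differs between the two pseudo-regret definitions, so the entire gap must come from the benchmark.

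First I would expand
\begin{equation*}
\E[R(T,\mathcal B)] \;=\; \sup_{v,b\in\mathcal B}\sum_{t=1}^T rgt(v,b) \;-\; \E\!\left[\sum_{t=1}^T \widetilde{rgt}_t(v_t,\bvec_t)\right],
\end{equation*}
and then add and subtract the discrete benchmark $\sup_{v',b'\in B}\sum_{t=1}^T rgt(v',b')$. Grouping the two benchmark terms yields exactly $\mathcal{DE}(B,\mathcal B)$ by its definition, while the remaining two terms are precisely $\E[R(T,B)]$. This gives the sharper identity $\E[R(T,\mathcal B)] = \E[R(T,B)] + \mathcal{DE}(B,\mathcal B)$, which immediately implies the claimed inequality.

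The statement appears as an inequality rather than an equality purely because, when $B\subseteq\mathcal B$, one has $\mathcal{DE}(B,\mathcal B)\geq 0$ (the supremum over a superset dominates the supremum over a subset), so the direction of the bound is preserved regardless. There is no main obstacle in proving the lemma itself; the substantive work, which this lemma only sets up, is to choose a discretization $B$ for which $\mathcal{DE}(B,\mathcal B)$ is small under some smoothness/Lipschitz assumption on $rgt(\cdot,\cdot)$, so that combining the lemma with Corollary~\ref{coro:fix-value-error} or Corollary~\ref{coro:non-fix-value-error} produces a meaningful continuous-bid guarantee.
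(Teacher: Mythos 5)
Your proof is correct: since the algorithm's empirical term $\E\bigl[\sum_{t=1}^T \widetilde{rgt}_t(v_t,\bvec_t)\bigr]$ is identical in both pseudo-regret definitions, adding and subtracting the discrete benchmark gives the identity $\E[R(T,\mathcal B)] = \E[R(T,B)] + \mathcal{DE}(B,\mathcal B)$, which is exactly the (by-definition) decomposition underlying the cited discretization result; the paper itself gives no separate proof. The only quibble is your closing remark: the inequality follows immediately from the equality irrespective of the sign of $\mathcal{DE}(B,\mathcal B)$, so the non-negativity observation is not needed.
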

In practice, the expected allocation function $g^*$ and payment function $p^*$ are both relatively smooth (see e.g.. the plots in \cite{LMSV18}). Assume the Lipschitzness of expected allocation function and payment function, the discretization error $\mathcal{DE}(B, \mathcal B)$ can be easily bounded, then our {\em Pseudo-Regret} analysis can be directly applied in this continuous bid space. %
 
\section{Simulations}\label{sec:simulation}
In this section, we run some simulations with Generalized Second Price (GSP) auctions to validate the theoretical results mentioned in previous sections. 
To visualize the performance of the algorithms, we plot the {\em Pseudo-Regret} versus the number of time steps\footnote{In the experiments section, {\em Pseudo-Regret} refers to the upper bound of $\E[R(T)]$ derived by Equations~(\ref{eq:ub-pseudo-regret1}),~(\ref{eq:pseudo-rgt-2}) and (\ref{eq:rgt-bound-3}) respectively.} . Every experiment is repeated for 10 times and the dash area represents the 95\% confidence intervals of {\em Pseudo-Regret} at each time step. \es{What do we want to better understand by running these simulations?}
\vspace{-5pt}
\subsection{Settings}
At each time $t$, the test bidder randomly participates $n$ GSP auctions. Each auction has 5 slots and 20 bidders without counting the test bidder, each bidder's bid is generated from uniform distribution on $[0,10]$. The {\em click-through-rate} sequence of each auction is generated from descending ordered $Beta(2,5)$ distribution (i.e. {\em click-through-rate} of each slot is first i.i.d generated from $Beta(2, 5)$ and then arranged by descending order). We consider a finite bids space $B \triangleq \{0.01, 0.02,\cdots, 10\}$.
\vspace{-5pt}
\subsection{The advertiser problem}
We test the performance of $\rgtucb$ algorithm with known valuation $v=9.5$. First, we fix the number of blocks be $16$ (i.e. $m=15$) and plot the {\em Pseudo-Regret} achieved by $\rgtucb$ with different number of auctions the test bidder participates at each time step, i.e. $n=16, 64, 256, 1024$ (Figure~\ref{fig:rgt-fix-value}(a)). We observe that the {\em Pseudo-Regret} decays linearly with $n$ which is consistent with our analysis. 

Second, we fix $n=1024$ and test the performance of the algorithm with different number of blocks, such as $m=1, 3, 7, 15, 31, 63$ (Figure~\ref{fig:rgt-fix-value}(a)). We find $m=7$ (8 blocks) achieves the lowest {\em Pseudo-Regret}. When $m$ is too large, the {\em Pseudo-Regret} curve incurs some shocks and suffers high variance because of the noisy observed information at each time. 
\begin{figure}
\begin{subfigure}{0.49\textwidth}
\centering
\includegraphics[scale=0.55]{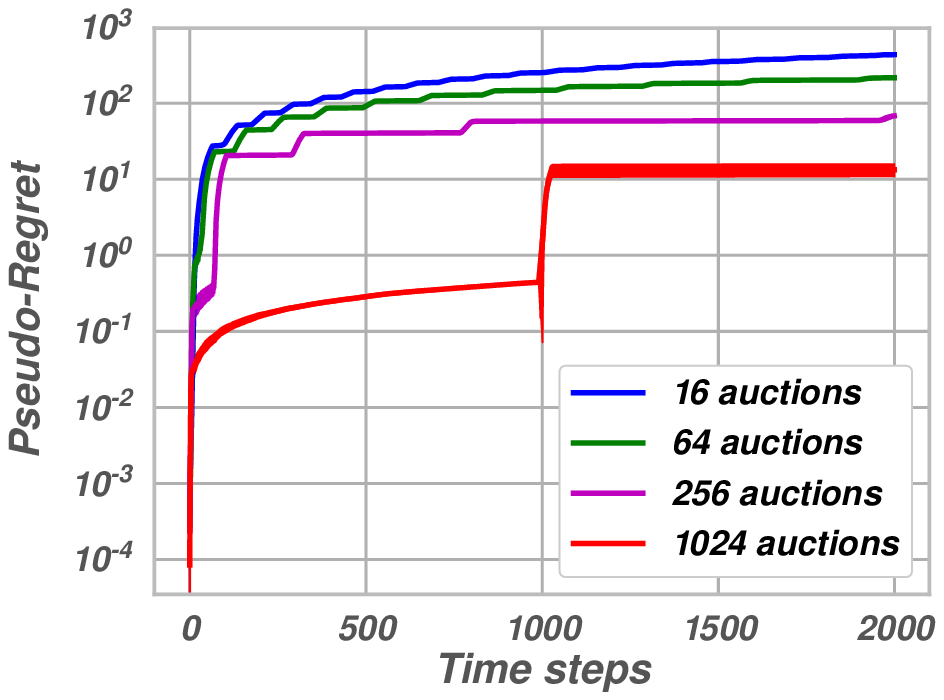}
~\\[2.5pt]\scriptsize{(a)}\vspace*{-5pt}
\end{subfigure}
\begin{subfigure}{0.49\textwidth}
\centering
\includegraphics[scale=0.55]{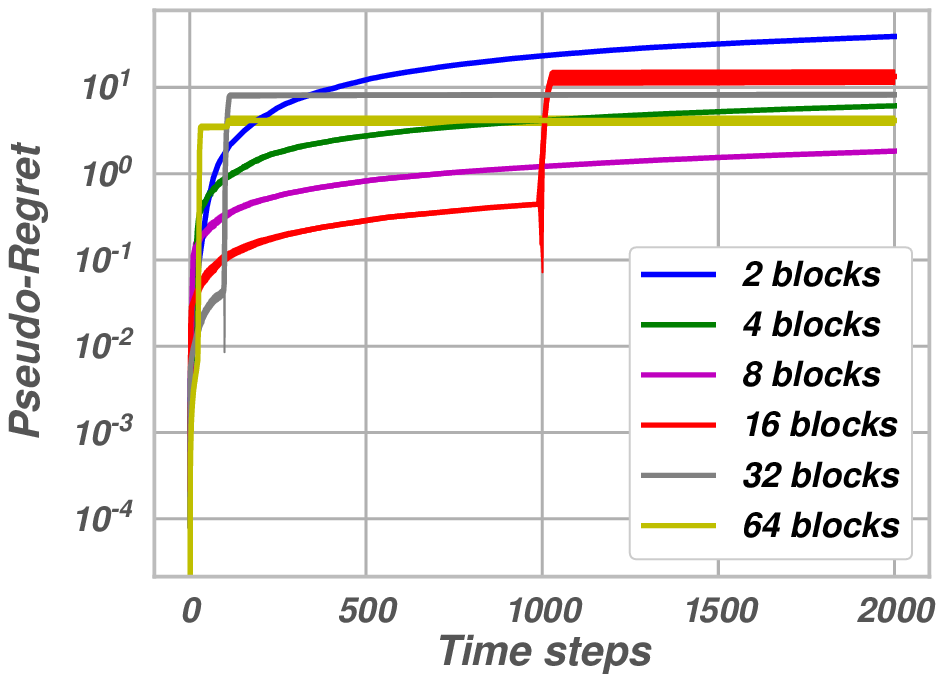}
~\\[2.5pt]\scriptsize{(b)}\vspace*{-5pt}
\end{subfigure}
\caption{A semi-logarithmic plot of {\em Pseudo-Regret} of $\rgtucb$ algorithm for known valuation $v=9.5$ (Algorithm~\ref{alg:rgtucb-multiple-bids}).}
\vspace{-10pt}
\label{fig:rgt-fix-value}
\end{figure}
\vspace{-5pt}
\subsection{The DSP problem}
In this simulation, we test the performance of $\rgtucb$ for unknown valuation case (i.e.,  the DSP problem). Similarly, we show the {\em Pseudo-Regret} curve for different $n$ given $m=15$ and different $m$ given $n=1024$ in Figure~\ref{fig:rgt-non-fix-value}. Figure~\ref{fig:rgt-non-fix-value}(a) validates that {\em Pseudo-Regret} decays when $n$ grows and we observe $m=3$ (4 blocks) gives the best {\em Pseudo-Regret}. This corresponds to our theory that the optimal $m$ in the DSP problem should be smaller than the optimal $m$ in the advertiser problem for the same auction setting when $|B|$ is large (see Corollary~(\ref{coro:fix-value-error}) and (\ref{coro:non-fix-value-error})).
\begin{figure} 
\begin{subfigure}{0.32\textwidth}
\centering
\includegraphics[scale=0.55]{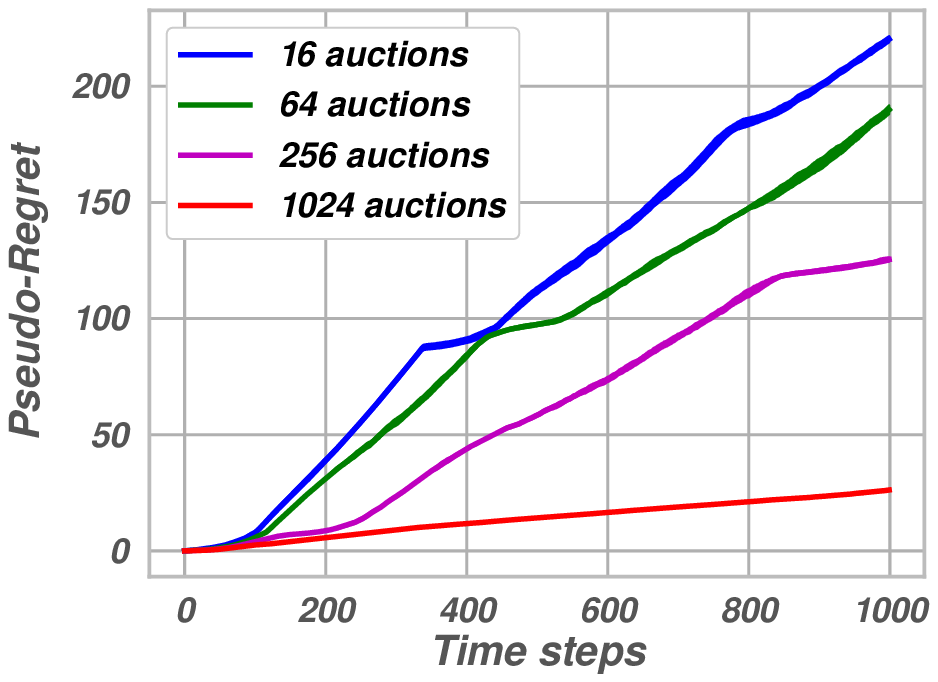}
~\\[2.5pt]\scriptsize{(a)}\vspace*{-5pt}
\end{subfigure}
\begin{subfigure}{0.32\textwidth}
\centering
\includegraphics[scale=0.55]{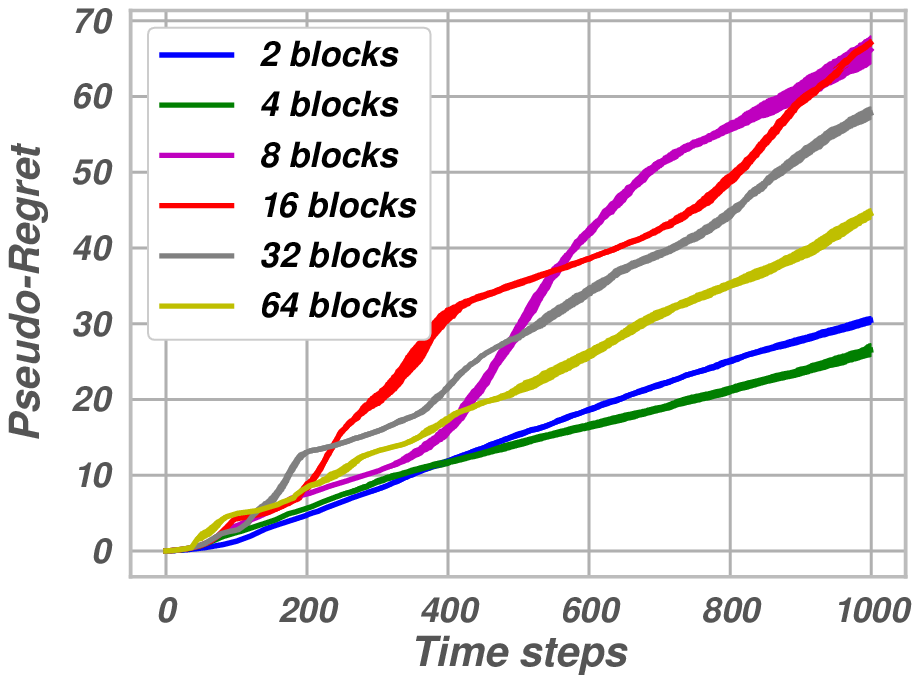}
~\\[2.5pt]\scriptsize{(b)}\vspace*{-5pt}
\end{subfigure}
\begin{subfigure}{0.32\textwidth}
\centering
\includegraphics[scale=0.55]{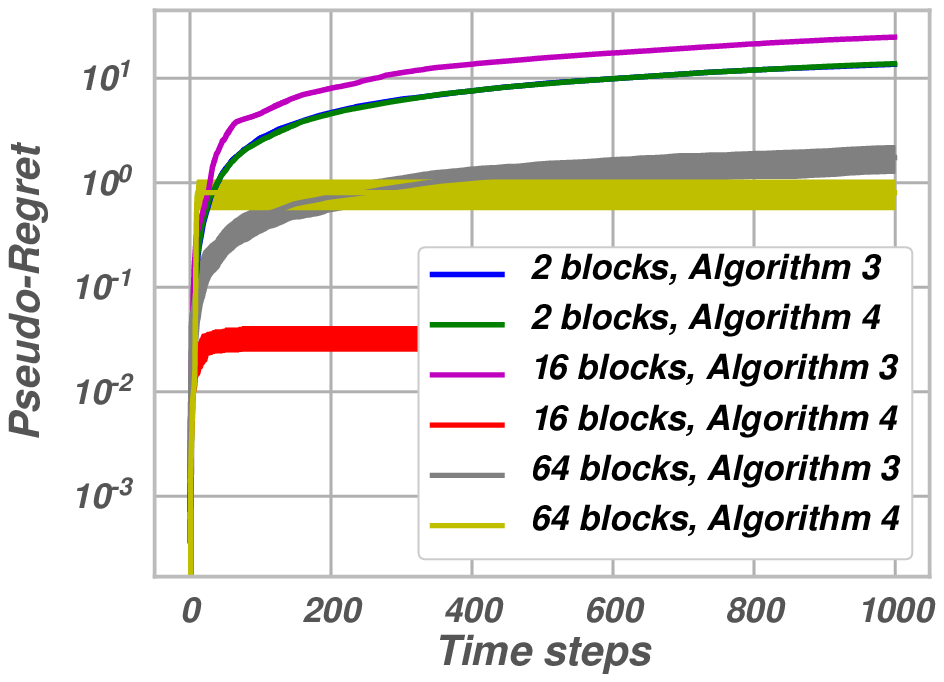}
~\\[2.5pt]\scriptsize{(c)}\vspace*{-5pt}
\end{subfigure}
\caption{{\em Pseudo-Regret} plot of $\rgtucb$ algorithm for unknown valuation. $(a)$: Algorithm~\ref{alg:rgtucb-multiple-bids-nonfixed-value} for different $n$, $(b)$: Algorithm~\ref{alg:rgtucb-multiple-bids-nonfixed-value} for different $m$ and $(c)$: a semi-logarithmic {\em Pseudo-Regret}  to compare Algorithm~\ref{alg:rgtucb-multiple-bids-nonfixed-value} and Algorithm~\ref{alg:rgtucb-multiple-bids-switching}. %
}
\vspace{-10pt}
\label{fig:rgt-non-fix-value}
\end{figure}
\vspace{-5pt}
\subsection{Efficiency of $\rgtucb$ algorithm}
We first introduce two standard baselines  used in this section, $\randombids$ algorithm and $\egreedy$ algorithm.

\textbf{$\randombids$. } At each time step $t$, the advertiser uniformly randomly choose $m$ bids $b_t$, while the DSP uniformly randomly choose $m$ bids $b_t$ and a valuation $v_t$ from $B$.
 
\textbf{$\egreedy$.} For the advertiser problem with known valuation $v$, at each time step, the advertiser uniformly randomly chooses $m$ bids from $B$ with probability $\epsilon$,  otherwise, chooses the $m$ bids that correspond to the largest average utility $\hat{u}(v, \cdot)$ terms.\footnote{In the experiments, we fix $\epsilon=0.1$} For the DSP problem, at each time step $t$, the DSP uniformly randomly choose $m$ bids $b_t$ and a valuation $v_t$ with probability $\epsilon$. Otherwise, the DSP chooses the $(v_t, b_t^1)$ pair corresponds to the largest $\widehat{rgt}$ terms and the rest $m-1$ bids associated with the largest $\hat{u}(v_t, \cdot)$ terms.

We compare the performance of our $\rgtucb$ algorithm with the above two baselines for the advertiser problem and the DSP problem. For the both settings, $\rgtucb$ algorithm performs better than two baselines for different number of blocks.
\begin{figure}
\begin{subfigure}{0.49\textwidth}
\centering
\includegraphics[scale=0.55]{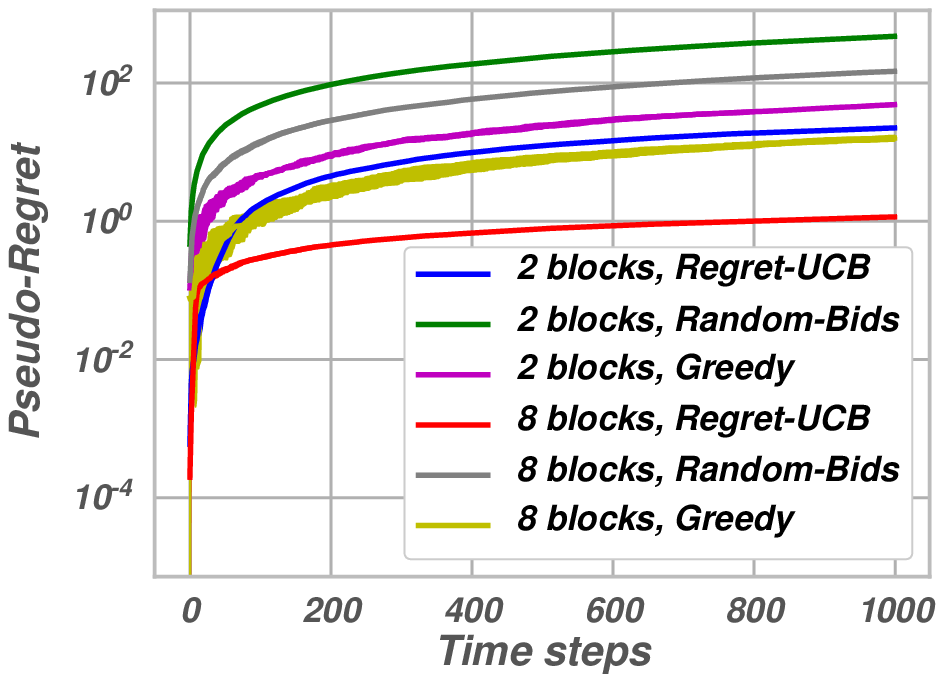}
~\\[2.5pt]\scriptsize{(a)}\vspace*{-5pt}
\end{subfigure}
\begin{subfigure}{0.49\textwidth}
\centering
\includegraphics[scale=0.55]{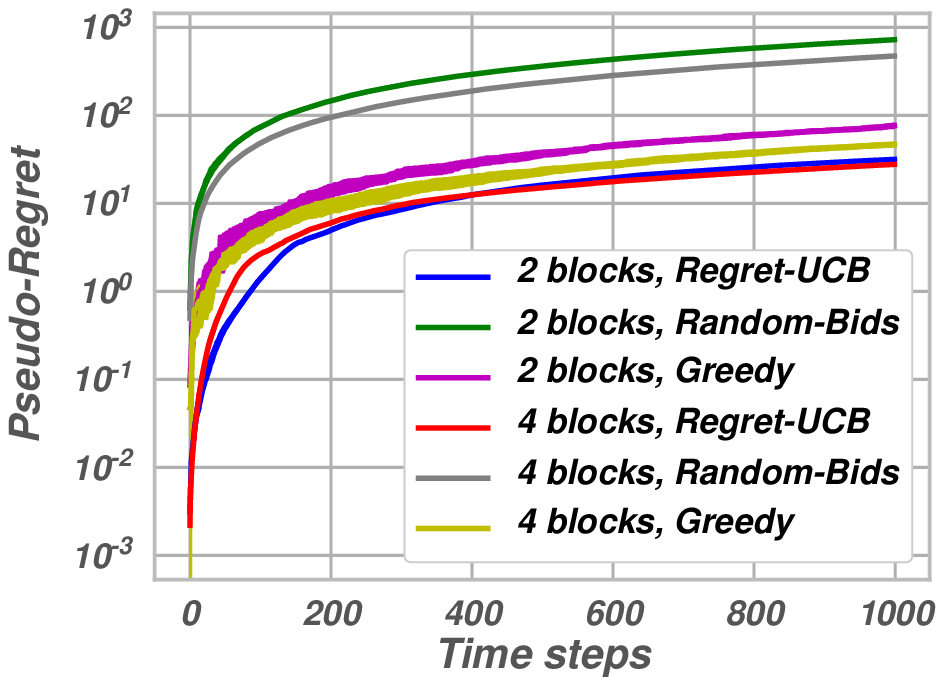}
~\\[2.5pt]\scriptsize{(b)}\vspace*{-5pt}
\end{subfigure}
\caption{A semi-logarithmic plot of {\em Pseudo-Regret} with $\randombids$, $\egreedy$ and $\rgtucb$ (Algorithm~\ref{alg:rgtucb-multiple-bids-nonfixed-value}) for the advertiser problem (a) and DSP problem (b).}
\vspace{-10pt}
\label{fig:rgt-baseline}
\end{figure}
\subsection{$\rgtucb$ when allowing switching value and bid}
Moreover, we compare the performance of Algorithm~\ref{alg:rgtucb-multiple-bids-nonfixed-value} and Algorithm~\ref{alg:rgtucb-multiple-bids-switching} for different $m$ given $n=1024$ in Figure~\ref{fig:rgt-non-fix-value}(c) to validate the theory that allowing switching value and bid leads to lower  Pseudo-Regret.\footnote{Based on our observation in the experiments, sometimes switching value and bid may "over-estimate" the IC regret, i.e. the empirical estimation of IC regret is sometimes larger than worst case IC regret.} In addition, we also observe the tradeoff between Pseudo-Regret and $m$ of Algorithm~\ref{alg:rgtucb-multiple-bids-switching}. Among $m=1, 15, 63$, the optimal $m$ is $15$ (16 blocks) for Algorithm~\ref{alg:rgtucb-multiple-bids-switching}. If we choose $m$ too large, like $m=63$, Pseudo-Regret is worse than $m=1$ case and incurs high variance.

\es{Need a conclusion.}
\section{Conclusion}
In this paper, we analyze the problem of measuring end-to-end Incentive Compatibility regret only given a black-box access to an auction mechanism. We consider two different main problems, where in the \emph{advertiser problem} the goal is to measure IC regret for some known valuation $v$ and in the more general \emph{demand-side platform (DSP) problem} we want to determine the worst-case IC regret over all possible valuations. For both problems, we propose online learning algorithms to measure the IC regret given the feedback (allocation and payment) of the auction mechanism when the test bidder participating auctions at each time. We theoretically bound the error of determining IC regret for both problems and extend to more general settings. Finally, we run simulations to validate our results. For next steps, it is worth exploring further avenues of relaxing the informational assumptions (e.g. what if the distributions of allocation and payment of auctions are non stationary) and testing our algorithms over real auctions data.

\bibliographystyle{ACM-Reference-Format}
\bibliography{ic-bandit}
\appendix
\section{Appendix}

\subsection{Proof of Theorem~\ref{thm:rgt-bound-multiple-bids-nonfixed-value}}\label{app:proof-thm2}
We  first denote some necessary notations in this section. Let $C_t(v,b) = \frac{48(m+1)(2U)^2 \ln t}{n\cdot \Delta^2(v, b)}$. Let event $\zeta_t$ be $$\zeta_t = \Big\{\forall b\in \bvec_t, n_{t-1}(v_t)\wedge n_{t-1}(b)\geq C_t(v_t,b)\Big\}$$ Let $(v, \bvec_t) \equiv \big\{(v, b_t^1),\cdots,(v, b_t^m)\big\}$, and $\Delta(v, \bvec_t) = \min_{j\in[m]}\Delta(v, b_t^j)=rgt(v^*, b^*) - \max_{j\in [m]}rgt(v, b_t^j)$. Moreover, let event $\zeta_{t, b}$ be $\big\{n_{t-1}(v_t)\wedge n_{t-1}(b)\geq C_t(v_t,b)\big\}$. Following the similar proof steps in Theorem~\ref{thm:rgt-bound-multiple-bids}, we first bound $\E[R(T)]$ as below,
\begin{equation}\label{eq:pseudo-rgt-2}
\begin{aligned}
&\E[R(T)]\\
& = \max_{v, b\in B}\left[\sum_{t=1}^T rgt(v, b)\right] - \E\left[\sum_{t=1}^T\widehat{rgt}_t(v_t, \bvec_t)\right]\\
& \overset{(a)}{=} \sum_{t=1}^T rgt(v^*, b^*) - \sum_{t=1}^T \E\left[\E\Big[\max_{j\in [m]} \tilde{g}^j_t(b_t^j)\cdot v_t - \tilde{p}^j_t(b_t^j)\Big| v_t, \bvec_t\Big]\right] + \sum_{t=1}^T \E\left[\E\Big[\tilde{g}^{m+1}(v_t)\cdot v_t - \tilde{p}_t^{m+1}(v_t)\Big] \Big| v_t\right]\\
& \overset{(b)}{\leq} \sum_{t=1}^T rgt(v^*, b^*) - \sum_{t=1}^T \E\left[\max_{j\in [m]} \E\Big[\tilde{g}^j_t(b_t^j)\cdot v_t - \tilde{p}^j_t(b_t^j)\Big| v_t, \bvec_t\Big]\right]+ \sum_{t=1}^T \E\left[u^*(v_t, v_t)\right]\\
& = \sum_{t=1}^T rgt(v^*, b^*) -\E\left[\max_{j\in [m]} rgt(v_t, b_t^j)\right]\\
& \leq \E\Bigg[\sum_{t=1}^T \1\big\{(v^*, b^*)\notin (v_t, \bvec_t)\big\}\cdot \big(rgt(v^*, b^*)-\max_{j\in [m]} rgt(v_t, b_t^j) \big)\Bigg]\\
& = \E\left[\sum_{t=1}^T\1\big\{(v^*, b^*) \notin (v_t, \bvec_t)\big\}\cdot \Delta(v_t, \bvec_t)\right]
\end{aligned}
\end{equation}
where inequality $(a)$ is based on the separation of expectation trick used in Equation~\ref{eq:ub-pseudo-regret1}$(b)$ and inequality $(b)$ holds because $\E[\max\{X_1, \cdots, X_n\}]\geq \max\{\E[X_1], \cdots, E[X_n]\}$. Then we separate the upper bound of $\E\left[R(T)\right]$ in Equations~(\ref{eq:pseudo-rgt-2}) into two terms,
\begin{equation*}
\begin{aligned}
&\E[R(T)] \leq \underbrace{\E\left[\sum_{t=1}^T\1\big\{(v^*, b^*) \notin (v_t, \bvec_t), \zeta_t\big\}\cdot \Delta(v_t, \bvec_t)\right]}_{\textbf{Term 1}} + \underbrace{\E\left[\sum_{t=1}^T\1\big\{(v^*, b^*) \notin (v_t, \bvec_t), \neg \zeta_t\big\}\cdot \Delta(v_t, \bvec_t)\right]}_{\textbf{Term 2}}
\end{aligned}
\end{equation*}

Finally we bound \textbf{Term 1} and \textbf{Term 2} respectively. For \textbf{Term 1}, we do the following decomposition,
\begin{align*}
&\textbf{Term 1} \leq \E\left[\sum_{t=1}^T\1\big\{(v^*, b^*) \notin (v_t, \bvec_t), \zeta_t\big\}\cdot \Delta(v_t, \bvec_t)\right]\\
& \leq \E\left[\sum_{t=1}^T\Delta(v_t, \bvec_t) \cdot\Big(\1\big\{v_t = v^*, b^*\notin \bvec_t, \zeta_t\big\} + \1\big\{v_t\neq v^*, \zeta_t\big\}\Big)\right]\\
&\leq \underbrace{\E\Bigg[\sum_{t=1}^T \1\big\{v_t=v^*, b^*\notin \bvec_t, \zeta_t\big\}\cdot \Delta(v_t, \bvec_t)\Bigg]}_{(*)} + \underbrace{\E\Bigg[\sum_{t=1}^T \1\big\{v_t\neq v^*, \zeta_t\big\}\cdot \Delta(v_t, \bvec_t)\Bigg]}_{(\#)}
\end{align*}
For term $(*)$, we have
\begin{align*}
(*) &\leq  \E\Bigg[\sum_{t=1}^T \1\big\{v_t = v^*, \mathrm{UCB}^\text{rgt}_t(v_t, b^j_t) \geq \mathrm{UCB}^\text{rgt}_t(v^*, b^*),\forall j\in [m], \zeta_t\big\}\cdot \Delta(v_t, \bvec_t)\Bigg]\\
&\leq \E\Bigg[\sum_{t=1}^T \frac{\Delta(v_t, \bvec_t)}{m}\cdot\sum_{j=1}^m \1\big\{\mathrm{UCB}^\text{u}_t(v^*, b_t^j)\geq\mathrm{UCB}^\text{u}_t(v^*, b^*), \zeta_t\big\}\Bigg]\\
&\overset{(c)}{\leq} \sum_{\substack{b\neq b^*}}\frac{\Delta(v^*, b)}{m}\cdot\left(1 + \sum_{t=1}^T\frac{4}{t^2}\right) \leq  \sum_{b\neq b^*}\frac{2\pi^2\Delta(v^*, b)}{3m}
\end{align*}
where the last inequality above holds based on Lemma~\ref{lem:rgt-utility}. Next we consider term $(\#)$,
\begin{align*}
&(\#)\leq \E\left[\sum_{t=1}^T \Delta(v_t, \bvec_t)\cdot\1\big\{v_t\neq v^*, \mathrm{UCB}^\text{rgt}_t(v_t, b_t^1)\geq \mathrm{UCB}^\text{rgt}_t(v^*, b^*), \zeta_t\big\}\right]\\
& \leq \sum_{v,b\in B, v\neq v^*} \Delta(v,b) \cdot \sum_{t=1}^T\PP\Big(\mathrm{UCB}^\text{rgt}_t(v_t, b)\geq \mathrm{UCB}^\text{rgt}_t(v^*, b^*)\Big|\zeta_{t,b}\Big)\\
& \overset{\text{Lemma}~\ref{lem:rgt-utility}}{\leq} \sum_{v,b\in B, v\neq v^*} \Delta(v,b) \cdot \Big(1 + \sum_{t=2}^T \frac{4}{t^2}\Big)\\
& = \sum_{v,b\in B, v\neq v^*} \frac{2\pi^2\Delta(v, b)}{3}
\end{align*}
Thus, $\textbf{Term 1}$ can be bounded by
\begin{equation}\label{ie:non-fix-value-term1}
\textbf{Term 1}\leq \sum_{\substack{v, b\in B, \\ (v,b)\neq(v^*, b^*)}} \frac{2\pi^2\Delta(v, b)}{3}\cdot\left(\1\big\{v\neq v^*\big\} + \frac{\1\big\{v= v^*\big\}}{m}\right)
\end{equation}
What remains is to bound \textbf{Term 2},  
\begin{equation}\label{ie:non-fix-value-term2}
\begin{aligned}
&\textbf{Term 2} \\
&\leq \E\Bigg[\sum_{t=1}^T \Delta(v_t, \bvec_t) \cdot \Big(\1\big\{\exists b\in \bvec_t, n_{t-1}(v_t) \leq C_t(v_t, b)\big\} + \1\big\{\exists b\in \bvec_t, n_{t-1}(b) \leq C_t(v_t, b)\big\}\Big)\Bigg]\\
& \leq \E\Bigg[\sum_{t=1}^T \sum_{b\in b_t}\Delta(v_t, b) \cdot \Big(\1\big\{n_{t-1}(v_t) \leq C_t(v_t, b) \big\}+ \1\big\{n_{t-1}(b) \leq C_t(v_t, b)\big\}\Big)\Bigg]\\
& \leq \sum_{\substack{v, b\in B, \\ (v,b)\neq(v^*, b^*)}} \Delta(v, b) \cdot\Big(\sum_{t=1}^T\1\big\{n_{t-1}(v) \leq C_t(v, b) \big\}+ \1\big\{n_{t-1}(b) \leq C_t(v, b)\big\}\\
& \leq \sum_{\substack{v, b\in B, \\ (v,b)\neq(v^*, b^*)}} 2\Delta(v, b) C_T(v, b) = \frac{96(m+1)(2U)^2\ln T}{n\Delta(v,b)}
\end{aligned}
\end{equation}

Combining inequalities~(\ref{ie:non-fix-value-term1}) and (\ref{ie:non-fix-value-term2}) we complete the proof.

\subsection{Proof of Theorem~\ref{thm:rgt-bound-switching}}\label{app:proof-thm-switching}
This proof is a straightforward extension of the proof in Theorem~\ref{thm:rgt-bound-multiple-bids} and Theorem~\ref{thm:rgt-bound-multiple-bids-nonfixed-value}. We only show the significant difference of this proof. Let's first denote some different notations used in this section. Let event $$\zeta_t \triangleq \left\{\forall i,j\in [m+1], n_{t-1}\big(b^i_t\big) \wedge n_{t-1}\big(b^j_t\big)\geq \frac{48(m+1)(2U)^2 \ln t}{n\Delta^2\big(b^i_t, b^j_t\big)}\right\}$$
and
$$\zeta_{t, b_1, b_2}\triangleq \left\{n_{t-1}\big(b_1\big) \wedge n_{t-1}\big(b_2\big)\geq \frac{48(m+1)(2U)^2 \ln t}{n\Delta^2\big(b_1, b_2\big)}\right\}$$
Denote $\bvec_t\otimes \bvec_t \triangleq \left\{\big(b^i_t, b^j_t\big)\right\}_{i,j\in [m+1], i\neq j}$ and $\Delta(\bvec_t) = rgt(v^*, b^*) -\max_{i,j\in [m+1], i\neq j} rgt\big(b^i_t, b^j_t\big)$. Follow the same argument adopted in Section~\ref{sec:proof-rgt-bound1}, we first decompose the {\em Pseudo-Regret} in the following way,
\begin{equation}\label{eq:rgt-bound-3}
\begin{aligned}
\E\left[R(T)\right]
&\leq \sum_{t=1}^T rgt(v^*, b^*) - \sum_{t=1}^T \E\left[\E\Big[\max_{i,j\in [m+1], i\neq j} \tilde{u}^j_t\big(b_t^i, b_t^j\big) - \tilde{u}^i_t\big(b_t^i, b_t^i\big)\Big|\bvec_t\Big]\right]\\
& \leq \sum_{t=1}^T rgt(v^*, b^*) - \sum_{t=1}^T \E\left[\max_{i,j\in [m+1], i\neq j}u^*\big(b_t^i, b_t^j\big) - u^*\big(b_t^i, b_t^i\big)\right]\\
&\leq \E\left[\sum_{t=1}^T \1\big\{(v^*, b^*)\notin \bvec_t\otimes \bvec_t\big\}\cdot \Delta(\bvec_t)\right]\\
& \leq \underbrace{\E\left[\sum_{t=1}^T \1\big\{(v^*, b^*)\notin \bvec_t\otimes \bvec_t, \zeta_t\big\}\cdot \Delta(\bvec_t)\right]}_{(*)} + \underbrace{\E\left[\sum_{t=1}^T \1\big\{(v^*, b^*)\notin b_t\otimes b_t, \neg\zeta_t\big\}\cdot \Delta(\bvec_t)\right]}_{(\#)}
\end{aligned}
\end{equation}

What remains is to bound term $(*)$ and term $(\#)$. To bound $(*)$, we use the exactly same technique in Equations~\ref{ie:fix-value-term1}, the difference is that we need to consider $\mathrm{UCB}^\text{rgt}$ terms and any bids combination $(b_1, b_2)$ in $\bvec_t\otimes \bvec_t$,
\begin{equation}
\begin{aligned}
(*)& \leq\E\Bigg[\sum_{t=1}^T\frac{1}{m(m+1)}\cdot \Big[\sum_{\substack{b_1, b_2\in \bvec_t\\b_1\neq b_2}}\Delta(b_1, b_2)\cdot\1\big\{\mathrm{UCB}^\text{rgt}_t(b_1,b_2)\geq \mathrm{UCB}^\text{rgt}_t(v^*,b^*), \zeta_t\big\}\Big]\Bigg]\\
& \leq \sum_{\substack{b_1, b_2\in B}} \frac{\Delta(b_1, b_2)}{(m+1)^2}\cdot\sum_{t=1}^T\PP\left(\mathrm{UCB}^\text{rgt}_t(b_1,b_2)\geq \mathrm{UCB}^\text{rgt}_t(v^*,b^*)\Big| \zeta_{t, b_1, b_2}\right)\\
& \leq \sum_{\substack{b_1, b_2\in B}} \frac{\Delta(b_1, b_2)}{(m+1)^2}\cdot \left(1+\sum_{t=1}^T\frac{4}{t^2}\right) \leq \sum_{\substack{b_1, b_2\in B}} \frac{2\pi^2\Delta(b_1, b_2)}{3(m+1)^2}
\end{aligned}
\end{equation}

To bound $(\#)$, we utilize the same argument in Equations~(\ref{ie:non-fix-value-term2}).
\begin{equation}\label{ie:switching-term2}
\begin{aligned}
(\#)
&\leq \E\Bigg[\sum_{t=1}^T \Delta(\bvec_t) \cdot \1\Big\{\exists b^i_t, b^j_t\in b_t, n_{t-1}(b_t^i) \leq \frac{48(m+1)(2U)^2 \ln t}{n\cdot \Delta^2(b_t^i, b_t^j)} \Big\}\Bigg]\\
&\leq \E\left[\sum_{t=1}^T \sum_{b_1, b_2\in b_t} \Delta(\bvec_t)\cdot \1\Big\{n_{t-1}(b_1) \leq \frac{48(m+1)(2U)^2 \ln t}{n\cdot \Delta^2(b_1, b_2)} \Big\}\right]\\
& \leq \sum_{b_1, b_2\in B} \Delta(b_1, b_2)\sum_{t=1}^T \1\Big\{n_{t-1}(b_1) \leq \frac{48(m+1)(2U)^2 \ln t}{n\cdot \Delta^2(b_1, b_2)} \Big\}\\
&\leq  \sum_{b_1, b_2\in B}\frac{192(m+1)U^2 \ln T}{n\cdot \Delta(b_1, b_2)} 
\end{aligned}
\end{equation}

\end{document}